\newtheorem{theorem}{Theorem}
\newtheorem{claim}[theorem]{Claim}
\newcounter{spslist}
\newcommand{\mat}[5]{ \renewcommand{\arraystretch}{#1}
                    \left[\!\! \begin{array}{cc}
                            #2 & #3 \\
                            #4 & #5 \end{array} \!\!\right] }
\newcounter{geqncount}
    {\refstepcounter{equation}%
     \setcounter{geqncount}{\value{equation}}%
     \setcounter{equation}{0}%
  }%
    {\setcounter{equation}{\value{geqncount}}}
\newcommand{\ZZ}{\mathbb{Z}}
\newcommand{\RR}{\mathbb{R}}
\newcommand{\CC}{\mathbb{C}}
\newcommand{\Vo}{\mathring{\mathcal{V}}}
\newcommand{\VG}{\mathcal{V}(\Gamma)}
\renewcommand{\Vert}{\mathcal{V}}
\newcommand{\EG}{\mathcal{E}(\Gamma)}
\newcommand{\GA}{\Gamma_{\hspace{-2pt}A}}
\newcommand{\GB}{\Gamma_{\hspace{-2pt}B}}
\newcommand{\GC}{\Gamma_{\hspace{-1pt}C}}
\newcommand{\GX}{\Gamma_{\hspace{-2pt}X}}
\begin{document}

\bibliographystyle{plain} 

\begin{center}
{\bf \Large Irreducibility of the Fermi Surface\\\vspace{4pt} for Planar Periodic Graph Operators}
\end{center}

\vspace{0.2ex}

\begin{center}
{\scshape \large Wei Li  \,\,and\,\, Stephen P. Shipman\footnote{Corresponding author; shipman@lsu.edu}} \\
\vspace{1ex}
{\itshape Department of Mathematics, Louisiana State University, Baton Rouge}
\end{center}

\vspace{3ex}
\centerline{\parbox{0.9\textwidth}{
{\bf Abstract.}\
We prove that the Fermi surface of a connected doubly periodic self-adjoint discrete graph operator is irreducible at all but finitely many energies provided that the graph (1) can be drawn in the plane without crossing edges (2) has positive coupling coefficients (3) has two vertices per period.  If ``positive" is relaxed to ``complex", the only cases of reducible Fermi surface occur for the graph of the tetrakis square tiling, and these can be explicitly parameterized when the coupling coefficients are real.  The irreducibility result applies to weighted graph Laplacians with positive weights.
}}

\vspace{3ex}
\noindent
\begin{mbox}
{\bf Key words:}  Graph operator, Fermi surface, Floquet surface, reducible algebraic variety, planar graph, graph Laplacian
\end{mbox}

\vspace{3ex}

\noindent
\begin{mbox}
{\bf MSC:}  47A75, 47B39, 39A70, 39A14, 39A12
\end{mbox}
\vspace{3ex}

\hrule
\vspace{1.1ex}

\section{Introduction}\label{sec:introduction} 

The Fermi surface (or Fermi curve) of a doubly periodic operator at an energy $E$ is the analytic set of complex wavevectors $(k_1,k_2)$ admissible by the operator at that energy.  Whether or not it is irreducible is important for the spectral theory of the operator because reducibility is required for the construction of embedded eigenvalues induced by a local defect \cite{KuchmentVainberg2000,KuchmentVainberg2006,Shipman2014} (except, as for graph operators, when an eigenfunction has compact support).
(Ir)reducibility of the Fermi surface has been established in special situations.  
It is irreducible for the discrete Laplacian plus a periodic potential in all dimensions~\cite{Liu2020} (see previous proofs for two dimensions~\cite{Battig1988},\cite[Ch.~4]{GiesekerKnorrerTrubowitz1993} and three dimensions~\cite[Theorem~2]{Battig1992}) and for the continuous Laplacian plus a potential that is separable in a specific way in three dimensions~\cite[Sec.~2]{BattigKnorrerTrubowitz1991}.
Reducibility of the Fermi surface is attained for multilayer graph operators constructed by appropriately coupling discrete graph operators~\cite{Shipman2014} or metric graph (quantum graph) operators~\cite{Shipman2014,Shipman2019,FisherLiShipman2020a}.

The multilayer graphs in~\cite{Shipman2014,Shipman2019} are inherently non-planar by their very construction---they cannot be rendered in the plane without crossing edges.  This led us to ask whether planarity prohibits reducibility of the Fermi surface.  In this work, we address this problem by direct computation, and we are able to give a complete answer for discrete graph operators with real coefficients and two vertices per period.  It turns out that planarity together with positivity of the coefficients prohibits reducibility.
Our results apply to discrete weighted graph Laplacians (Theorem~\ref{thm:laplacian}), where the positivity of the weights are a discrete version of coercivity of second-order elliptic operators.

Additionally, we find that irreducibility occurs even for complex operator coefficients, including magnetic Laplacians, except for the planar graph whose faces are the tetrakis square tiling (Fig.~\ref{fig:tetrakis}).  For that graph, reducibility occurs for all energies for certain non-positive choices of the coefficients.  All the reducible cases can be explicitly parameterized when the coupling coefficients are real.
Our results are stated in section~\ref{sec:theorems}.  

{\color{black}
Our study of the Fermi surface is part of a more general effort to understand fine properties of the dispersion function $D(k_1,k_2,E)$ of wavevector and energy, particularly using computational methods to attack problems that hitherto evade theoretical methods.
Recently, computational techniques from algebraic geometry~\cite{DoKuchmentSottile2019a} have been used to analyze the genericity of the extrema of the zero set of $D$, also focusing on doubly periodic discrete graph operators with two vertices per period, as in our present work.
}

\section{Periodic graph operators and their Fermi surfaces}\label{sec:setup}

Let $\Gamma$ be a graph with vertex set $\VG$ and edge set $\EG$ drawn in the plane $\RR^2$, such that $\Gamma$ is invariant with respect to shifts along two linearly independent vectors $\xi_1$ and $\xi_2$.  For any $n=(n_1,n_2)\in\ZZ^2$, denote the shift of a vertex $v\in\VG$ or an edge $e\in\EG$ along $n_1\xi_1+n_2\xi_2$ by~$nv$ or~$ne$.  This realizes the shift group $\ZZ^2$ as a free group of isomorphisms of $\Gamma$.  We assume from the beginning that $\VG$ has exactly two orbits; a fundamental domain contains one vertex from each orbit, and the set of these two vertices is denoted by $\Vo=\{v_1,v_2\}$.  Denote the $\ZZ^2$-orbit of $v_i$ by $\Vert_i$, so that $\VG=\Vert_1\mathring\cup\,\Vert_2$.

\subsection{Periodic operators}\label{sec:periodic}

This section sets the background and notation for periodic operators on the space $\ell^2(\VG,\CC)$ of square-integrable functions from the vertex set of $\Gamma$ to the complex field~$\CC$.  Given an operator on $\ell^2(\VG,\CC)$, the edges of the graph $\Gamma$ correspond to nonzero matrix elements of this operator.

Any function $g:\Vo\to\CC$ is identified with the vector $(g_1,g_2)\in\CC^2$ with $g_j = g(v_j)$; and any function $\tilde f\in\ell^2(\VG,\CC)$ is identified with a function $f\in\ell^2(\ZZ^2,\CC^2)$ by $f(n)_j = \tilde f(nv_j)$.  Denote the standard elementary basis vectors in $\CC^2$ by $\{\epsilon_j\}_{j=1}^2$.
A periodic, or shift-invariant, operator $A$ on $\ell^2(\VG,\CC)\cong\ell^2(\ZZ^2,\CC^2)$ is a convolution operator.  Precisely, for each $n\in\ZZ^2$, let $A(n)$ be a $2\!\times\!2$ matrix.  For $f\in\ell^2(\ZZ^2,\CC^2)$, define
\begin{equation}\label{convolution}
  (Af)(n) \;=\; \sum_{m\in\ZZ^2} A(m) f(n-m).
\end{equation}
The operator $A$ is said to be of {\em finite extent} if there is a number $M$ such that $A(n)=0$ for $|n|>M$.
$A$ is self-adjoint if $A(-n)=A(n)^*$ for all $n\in\ZZ^2$.
This class of operators includes the standard and magnetic discrete Laplacians (see~\cite[\S2]{Harper1955} and~\cite[eqn.\,(1.4)]{KorotyaevSaburova2017}).
 
The graph $\Gamma$ is called the {\em graph associated to the operator $A$} provided each entry of $A$, as a matrix indexed by $\VG$, is nonzero if and only if the corresponding pair of vertices is connected by an edge in~$\EG$.  More precisely, for all $i,j\in\{1,2\}$ and all $n\in\ZZ^2$,
\begin{equation}
  \{v_i, nv_j\} \in \EG
  \;\iff\;
  (\epsilon_i,A(n)\epsilon_j) \not=0.
\end{equation}
Double edges do not make sense for discrete graph operators, but loops do.

\subsection{The Fermi surface}\label{sec:fermi}

The $z$-transform (or Floquet transform $\hat f(n,z)$ at $n=0$) of a function $f\in\ell^2(\ZZ^2,\CC^2)$ is the formal Laurent series
\begin{equation}
  \hat f(z) = \sum_{m\in\ZZ^2}f(m)z^{-m},
  \quad z = (z_1,z_2). 
\end{equation}
The notation is $z^m = z_1^{m_1}z_2^{m_2}$, for $m=(m_1,m_2)$.  The Floquet transform of the finite-extent operator $A$ is a Laurent polynomial in $z=(z_1,z_2)$ with $2\!\times\!2$ matrix coefficients,
\begin{equation}
  \hat A(z) = \sum_{m\in\ZZ^2}A(m)z^{-m}.
\end{equation}
Under the Floquet transform, the operator $A$ becomes a matrix multiplication operator,
\begin{equation}
  (Af)\hat{\hspace{2pt}}(z) = \hat A(z)\hat f(z).
\end{equation}
The {\em dispersion function} for $A$ is the determinant
\begin{equation}\label{D}
  D(z_1,z_2,E) \;=\; \det \big(\hat A(z_1,z_2) - E\big).
\end{equation}
It is a Laurent polynomial in $z_1$ and $z_2$ and a polynomial in $E$.
The {\em Floquet surface} $\Phi_E$ for $A$ at energy $E$ is the algebraic set
\begin{equation}
  \Phi_E \;=\; \left\{ (z_1,z_2)\in(\CC^*)^2 : D(z_1,z_2,E)=0 \right\},
\end{equation}
{\color{black}where $\CC^* = \CC\backslash\left\{0\right\}$.}
A pair $(z_1,z_2)$ lies on $\Phi_E$ if and only if $A$ admits a Floquet mode $u$ at $(z_1,z_2,E)$, which is a function $u:\VG\to\CC$ (not in $\ell^2$) such that, for all $v\in\VG$ and $n=(n_1,n_2)\in\ZZ^2$,
\begin{equation}
  Au=Eu
  \quad\text{and}\quad
  u(nv) = z_1^{n_1}z_2^{n_2}u(v),
\end{equation}
that is, $u$ is a simultaneous eigenfunction of $A$ and of the $\ZZ^2$ action.
When considered as a relation between the wavenumbers $(k_1,k_2)\in\CC^2$ (for fixed $E$), where $z_1=e^{ik_1}$ and $z_2=e^{ik_2}$, the Floquet surface is called the {\em Fermi surface}.

The Floquet surface (or the Fermi surface) $\Phi_E$ at energy $E$ is {\em reducible} whenever $D(z_1,z_2,E)$ can be factored nontrivially, that is,
\begin{equation}
  D(z_1,z_2,E) \;=\; D_1(z_1,z_2,E)D_2(z_1,z_2,E),
\end{equation}
in which $D_1$ and $D_2$ are Laurent polynomials in $z_1$ and $z_2$, neither of which is a monomial.
That is to say, reducibility of $\Phi_E$ is the same as nontrivial factorization of $D(z_1,z_2,E)$.

Figure~\ref{fig:labels} shows an example of (one period of) a periodic (nonplanar) graph with two vertices and {\color{black}ten} edges per period, including two loops.  The matrix elements of a self-adjoint periodic operator $A$ with this associated graph are labeled.  The matrix $\hat A(z)-E$ is 
\begin{equation}\label{AzE}
\begin{split}
  \hat A(z)-E\; =\; \mat{1.1}{a_0-E}{b_0}{\bar b_0}{c_0-E}
   &+ z_1\mat{1.1}{a_1}{0}{d_1}{c_1}
     + z_2\mat{1.1}{a_2}{0}{d_2}{0}
     + z_1z_2\mat{1.1}{a_3}{0}{d_3}{0} + \\
   &+ z_1^{-1}\mat{1.1}{\bar a_1}{\bar d_1}{0}{\bar c_1}
     + z_2^{-1}\mat{1.1}{\bar a_2}{\bar d_2}{0}{0}
     + z_1^{-1}z_2^{-1}\mat{1.1}{\bar a_3}{\bar d_3}{0}{0}.
\end{split}
\end{equation}
In section~\ref{sec:reduction}, we show that, by collecting all planar connected periodic graphs into isomorphism classes, it is sufficient to consider only the edges shown in Fig.~\ref{fig:labels}.

\begin{figure}[h]
\centerline{\scalebox{0.25}{\includegraphics{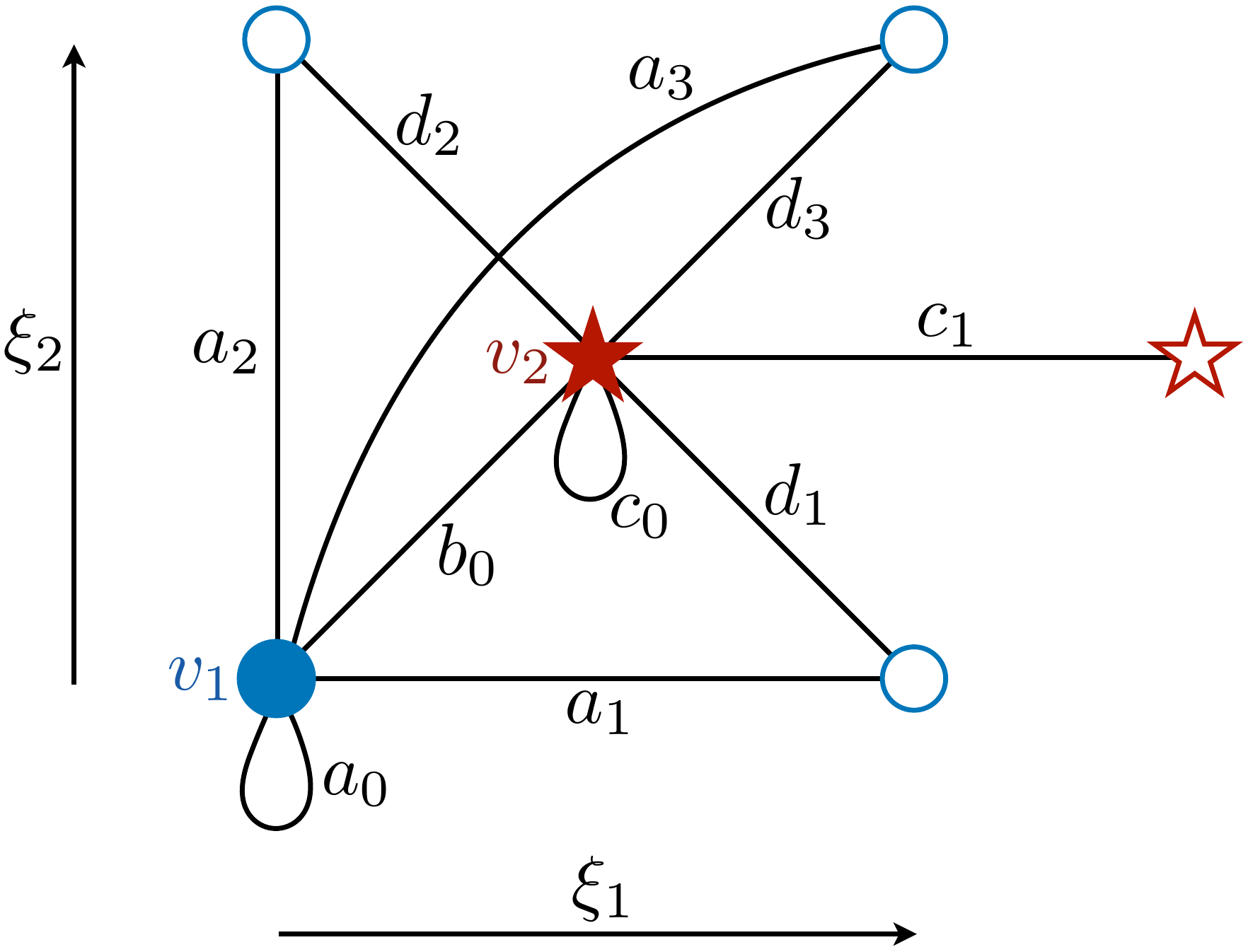}}}
\caption{\small All possible edges in a fundamental domain for the sixteen equivalence classes discussed in Section~\ref{sec:reduction}.}
\label{fig:labels}
\end{figure}

\subsection{Theorems}\label{sec:theorems}


The main result of this work is Theorem~\ref{thm:irreducible}, and Theorem~\ref{thm:laplacian} is an application of it to weighted graph Laplacians with positive coefficients.  Subsequent sections develop their proofs.

\begin{theorem}[Irreducible Fermi surface for planar graphs]\label{thm:irreducible}
Let $A$ be a doubly periodic self-adjoint discrete graph operator (with complex graph coefficients) whose associated graph $\Gamma$ is connected and planar with two vertices per fundamental domain.  The Fermi surface for $A$ is either (i) irreducible for all but a finite number of energies~$E$ or (ii) reducible for all energies~$E$.

(a) If the matrix elements of $A$ corresponding to the edges of $\Gamma$ are real and those corresponding to non-loops are positive, then (i) holds.

(b)
If (ii) holds, then $\Gamma$ minus its loops is the graph of the tetrakis square tiling.
If a factorization of type $D(z_1,z_2)=D_1(z_1)D_2(z_2)$ is admitted, it must be of the form
\begin{equation}\label{d1d2}
  D(z_1,z_2) \;=\; C\big(\alpha_1z_1 + \bar\alpha_1z_1^{-1} + (\beta_1+\gamma_1E)\big)\big(\alpha_2z_2 + \bar\alpha_2z_2^{-1} + (\beta_2+\gamma_2E)\big),
\end{equation}
in which $C$ is a constant, $\alpha_1$ and $\alpha_2$ are complex numbers, and $\beta_i$ and $\gamma_i$ are real numbers.  
If the graph coefficients are real, then (\ref{d1d2}) is the only possible factorization and $\alpha_i$ are real.  The set of all operators with real coefficients and reducible Fermi surface is parameterized explicitly by four real variables.
\end{theorem}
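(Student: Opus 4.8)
The plan is to reduce the entire theorem to a single perfect-square condition on the discriminant of the $2\times 2$ symbol. First I would invoke the classification of Section~\ref{sec:reduction} to assume the edge set is among those of Figure~\ref{fig:labels}, so that $\hat A(z)-E$ is the explicit matrix (\ref{AzE}); I write its diagonal entries as $P(z)-E$ and $R(z)-E$ and its off-diagonal entries as $Q_{12}(z),Q_{21}(z)$, where self-adjointness forces $Q_{12}(z)=\overline{Q_{21}(1/\bar z)}$ and makes $P,R$ real on the torus. Then $D=E^{2}-(P+R)E+(PR-Q_{12}Q_{21})$ is monic and quadratic in $E$, and its two $E$-roots are the eigenvalue branches $h_\pm(z)$ of $\hat A(z)$.

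Next I would establish the dichotomy. Since the coefficient of $E^{2}$ in $D$ is the unit $1$, in any nontrivial factorization each factor must carry $E$ (an $E$-free factor would divide a constant and hence be a monomial), so both factors are linear in $E$ and $D=(E-h_+(z))(E-h_-(z))$ with $h_\pm$ genuine Laurent polynomials; equivalently the discriminant $(P-R)^2+4Q_{12}Q_{21}$ is a perfect square $S(z)^2$ in $\CC[z_1^{\pm},z_2^{\pm}]$. If it is not a perfect square, $D$ is irreducible over $\CC[E][z_1^{\pm},z_2^{\pm}]$ and a standard specialization argument keeps $D(\cdot,E)$ irreducible for all but finitely many $E$, giving (i). If it is a perfect square, then self-adjointness is decisive: the conjugate symmetry of $\hat A$ makes the coefficients of $h_\pm$ occur in conjugate pairs at opposite exponents, so a non-constant $h_\pm$ yields a non-monomial factor $E-h_\pm$ for every complex $E$; thus either some $h_\pm$ is constant (a flat band, reducible at one energy, again (i)) or both are non-constant and $D(\cdot,E)$ is reducible for all $E$, giving (ii).

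The heart of the argument is to decide when $(P-R)^2+4Q_{12}Q_{21}=S^2$ with non-constant branches. Writing $S$ in the symmetric form forced by self-adjointness and comparing Newton polygons, the square cannot absorb the corner monomials generated by a nonzero diagonal cross-edge $a_3$ or a nonzero $v_2$--$v_2$ edge $c_1$; carried through the finitely many edge-classes of Section~\ref{sec:reduction} this forces $a_3=c_1=0$ with $a_1,a_2,b_0,d_1,d_2,d_3$ all nonzero, which is exactly the tetrakis square tiling (Figure~\ref{fig:tetrakis}) after deleting loops. The same comparison on interior monomials produces the factorization (\ref{d1d2}): matching the $z_1z_2$, $z_1z_2^{-1}$, $z_1E$, $z_2E$ and $E^{2}$ coefficients gives $C\gamma_1\gamma_2=1$, $C\alpha_1\gamma_2=-a_1$, $C\alpha_2\gamma_1=-a_2$ and $C\alpha_1\alpha_2=-b_0d_3=-d_1d_2$, whence the structural relations $a_1a_2=-b_0d_3=-d_1d_2$. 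For real coefficients these equations admit only real $\alpha_i$ and leave no room for a non-separable factor, so (\ref{d1d2}) is the only factorization; solving the remaining linear matching conditions for $S$ and the loop constants then exhibits the reducible real operators as a four-parameter real family.

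Part (a) follows immediately from $a_1a_2=-b_0d_3$: if all non-loop coefficients are real and positive then $a_1a_2>0$ while $-b_0d_3<0$, so (ii) is impossible and the dichotomy forces (i); for every non-tetrakis graph (ii) was already excluded structurally. The step I expect to be the main obstacle is the identification in the third paragraph, namely proving that the perfect-square condition singles out the tetrakis graph: the reduction of Section~\ref{sec:reduction} makes this a finite verification, but the Newton-polygon and coefficient bookkeeping across the edge-classes, together with ruling out non-separable factorizations for complex coefficients, is where the real effort lies, whereas the dichotomy and the positivity obstruction are short once the discriminant reformulation is available.
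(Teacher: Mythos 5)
Your reduction of the theorem to the perfect-square condition on the discriminant has a genuine gap, and it sits at the very first step: the dichotomy. Your criterion characterizes factorizations of $D$ in the ring $\CC[E][z_1^{\pm},z_2^{\pm}]$, i.e.\ factorizations whose factors depend \emph{polynomially} on $E$. But reducibility of the Fermi surface at energy $E_0$ is a factorization of $D(\cdot,E_0)$ in $\CC[z_1^{\pm},z_2^{\pm}]$, where the factors may depend on $E_0$ in any way --- in particular algebraically. The ``standard specialization argument'' (Bertini--Noether) requires geometric irreducibility over $\overline{\CC(E)}$, not merely irreducibility over $\CC[E]$, and the two are not equivalent. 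Concretely: $E^2-h(z)^3$ is irreducible in $\CC[E][z_1^{\pm},z_2^{\pm}]$ whenever $h$ is not a perfect square (its discriminant $4h^3$ is not a square), yet its specialization factors for \emph{every} $E_0$ as $-\prod_{j=0}^{2}\bigl(h(z)-\omega^j E_0^{2/3}\bigr)$ with $\omega$ a cube root of unity. The same defect invalidates your flat-band subcase: if $h_+\equiv c$ and $h_-=g(z)^2$, then $E_0-h_-=(\sqrt{E_0}-g)(\sqrt{E_0}+g)$ is reducible for every $E_0$, so ``flat band $\Rightarrow$ (i)'' is false as stated. Hence, when the discriminant is not a perfect square in $\CC[z_1^{\pm},z_2^{\pm}]$, you cannot conclude (i); you would additionally have to exclude factorizations over $\overline{\CC(E)}$ into Galois-conjugate factors (equivalently, fixed-$E$ factorizations with algebraic $E$-dependence). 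This is precisely what the paper's method is built to handle: for each of the sixteen graphs it fixes $E$, enumerates all admissible supports (index pairs $\mathcal{I}_a,\mathcal{I}_b$) of potential factors, treats the factor coefficients $a_{i,j},b_{k,l}$ as \emph{free unknowns}, and uses Groebner-basis elimination to characterize exactly when a factorization exists; no polynomial dependence of the factors on $E$ is ever assumed, and the dichotomy (i)/(ii) then falls out of viewing the eliminated relations as polynomials in $E$ (Section~\ref{sec:alternatives}).

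A secondary, independent problem concerns part (b) for real coefficients. You assert that coefficient matching ``leaves no room for a non-separable factor,'' but excluding the non-separable shape $(a_{1,1}z_1z_2+a_{1,0}z_1+a_{0,1}z_2+a_{0,0})(b_{1,1}z_1z_2+b_{1,0}z_1+b_{0,1}z_2+b_{0,0})$, i.e.\ pair~(\ref{eq:C8fact1}), is the hardest part of the paper's argument (Section~\ref{sec:C8fact1}): it requires the symmetry $D(z_1,z_2)=D(z_1^{-1},z_2^{-1})$, unique factorization to get $P_c=F_1F_2=F_3F_4$, a divisibility case analysis ($F_3\nmid F_1$, then $F_3\mid F_2$), and a further elimination computation leading to the parameterization~(\ref{parameterization2}). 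Indeed, for complex coefficients this very step remains unsolved in the paper, so it cannot be dispatched as routine bookkeeping. Your relations $a_1a_2=-b_0d_3=-d_1d_2$ and the positivity obstruction for part (a) do agree with the paper's conclusions for the tetrakis case, but as presented they only constrain separable (and polynomial-in-$E$) factorizations, which is not enough to prove either the dichotomy or the uniqueness claims of the theorem.
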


The parameterization of the set of graph operators with real coefficients associated to the tetrakis tiling will be given in section~\ref{sec:algorithm} (equations~(\ref{parameterization2})).

Part (a) of this theorem applies to weighted discrete graph Laplace operators.  Such an operator $A$ acts on a function $\tilde f\in\ell^2(\VG,\CC)$ by
\begin{equation}\label{laplacian}
  (A\tilde f)(v) \;=\; \sum_{e=(w,v)\in\EG} \hspace{-2pt} a_{e} \big( \tilde f(w) - \tilde f(v) \big).
\end{equation}
%
By identifying $\tilde f$ with $f\in\ell^2(\ZZ^2,\CC^2)$ as described in section~\ref{sec:periodic}, $A$ can be written in the convolution form~(\ref{convolution}).  The matrices $A(m) = \left\{ a^m_{ij} \right\}_{\!i,j\in\{1,2\}}$ and the off-diagonal elements of $A(0)$ can be chosen independently through the choice of the weights $a_e$.  Then the diagonal entries of $A(0)$ are determined by the following relations for $i=1,2$:
\begin{equation}
  \hspace{-4pt}\sum_{m\in\ZZ^2} \hspace{-3pt} a^m_{i1} + a^m_{i2} \;=\; 0\,.
\end{equation}
The graph associated with this operator (as described in section~\ref{sec:periodic}) typically has loops ($a^0_{ii}\not=0$) due to the ``self-interactions" coming from the $-\tilde f(v)$ terms in (\ref{laplacian}).  If all the numbers $a_e$ in~(\ref{laplacian}) are positive, then only the loops contribute non-positive (but real) coefficients to $A$.  Thus part (a) of Theorem~\ref{thm:irreducible} yields a corollary.

\begin{theorem}\label{thm:laplacian}
Let $A$ be a doubly periodic discrete graph Laplace operator with positive weights, whose associated graph $\Gamma$ is connected and planar with two vertices per fundamental domain.  Then the Fermi surface for $A$ is irreducible for all but a finite number of energies~$E$.
\end{theorem}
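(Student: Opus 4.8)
The plan is to recognize Theorem~\ref{thm:laplacian} as a direct corollary of part (a) of Theorem~\ref{thm:irreducible}, so the whole task reduces to checking that a positively weighted graph Laplacian meets the hypotheses of that part. First I would write $A$ in the convolution form~(\ref{convolution}) by identifying $\tilde f\in\ell^2(\VG,\CC)$ with $f\in\ell^2(\ZZ^2,\CC^2)$ as in Section~\ref{sec:periodic}, and then expand~(\ref{laplacian}). This splits $A$ into two kinds of contributions: the ``hopping'' terms $\sum_e a_e\,\tilde f(w)$, which populate the off-diagonal and off-origin matrix elements $a^m_{ij}$ (for $(m,j)\neq(0,i)$) with the edge weights $a_e$, and the ``self-interaction'' terms $-\big(\sum_e a_e\big)\tilde f(v)$, which feed only into the diagonal entries $a^0_{ii}$ of $A(0)$. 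Note in passing that a genuine loop $e=(v,v)$ contributes $a_e(\tilde f(v)-\tilde f(v))=0$ to~(\ref{laplacian}), so the loop coefficients $a^0_{ii}$ arise solely from this self-interaction sum.

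Next I would verify self-adjointness, reality, and the correct sign pattern. Because each undirected edge $e$ carries a single real weight $a_e$ shared by its two orientations, the matrix element coupling $v_i$ to $nv_j$ equals that coupling $v_j$ to $(-n)v_i$, giving $A(-n)=A(n)^*$; since every weight is real, $A$ is in fact real symmetric, so all matrix elements corresponding to edges of $\Gamma$ are real. Each non-loop matrix element equals a positive weight $a_e>0$, while the only entries that can fail to be positive are the diagonal loop coefficients $a^0_{ii}$, which are real and, by the zero-row-sum relation stated before the theorem, equal to $-\sum_{(m,j)\neq(0,i)}a^m_{ij}\le 0$. This is exactly the hypothesis of Theorem~\ref{thm:irreducible}(a): real throughout, with positivity required only on non-loops. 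The graph-theoretic hypotheses transfer as well, since the graph associated to the operator $A$ is $\Gamma$ together with the loops produced by the self-interaction terms, and attaching loops at vertices forces no edge crossing and alters neither connectedness nor the number of vertices per period; hence the operator's associated graph remains connected, planar, and two-vertex-per-period.

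With all hypotheses of Theorem~\ref{thm:irreducible}(a) in force, conclusion (i) of that theorem holds verbatim, namely the Fermi surface of $A$ is irreducible for all but a finite number of energies~$E$, which is precisely the assertion of Theorem~\ref{thm:laplacian}. The only point that genuinely uses positivity of the weights---and the sole step requiring care---is the bookkeeping showing that the self-interaction contributions localize entirely onto the diagonal loop coefficients, so that no non-loop matrix element is ever driven to zero or a negative value. Once this localization is confirmed there is no further computation to carry out: both the finite exceptional set of energies and the irreducibility are delivered wholesale by Theorem~\ref{thm:irreducible}.
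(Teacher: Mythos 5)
Your proposal is correct and is essentially the paper's own argument: the paper likewise derives Theorem~\ref{thm:laplacian} as a corollary of Theorem~\ref{thm:irreducible}(a), by noting that writing the Laplacian in convolution form~(\ref{convolution}) puts the positive weights $a_e$ on the non-loop matrix elements while the zero-row-sum relation confines the non-positive (real) self-interaction contributions to the diagonal loop coefficients $a^0_{ii}$, which part (a) permits. Your additional bookkeeping (self-adjointness, loops contributing zero to~(\ref{laplacian}), preservation of planarity and connectedness when loops are attached) just makes explicit what the paper leaves implicit.
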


\begin{figure}[h]
\centerline{\scalebox{0.2}{\includegraphics{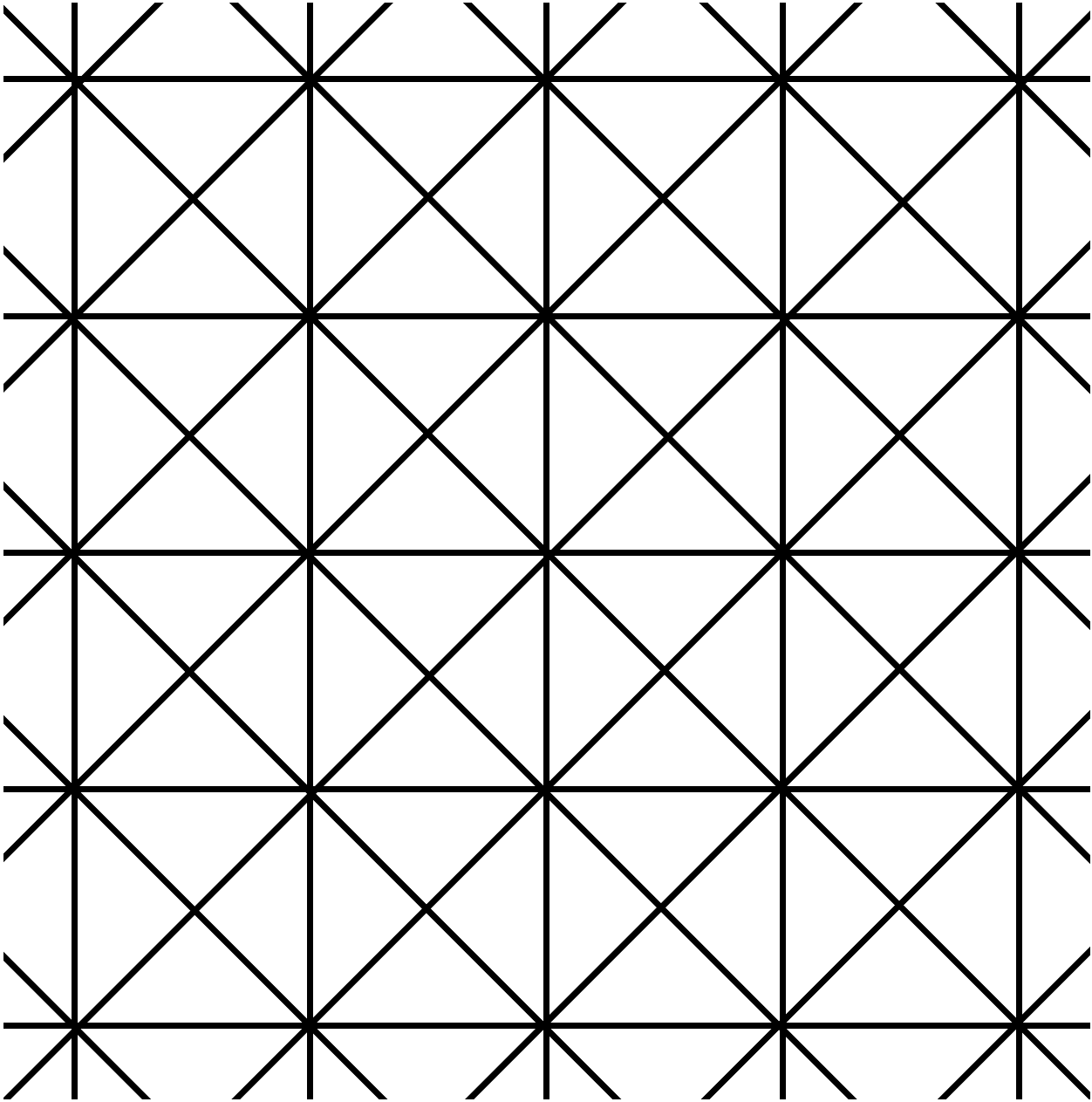}}}
\caption{\small The periodic graph corresponding to the tetrakis square tiling of the plane.}
\label{fig:tetrakis}
\end{figure}


\section{Reduction to sixteen equivalence classes}\label{sec:reduction} 

This section describes the reduction of all graphs associated to the operators we are considering to sixteen cases of equivalent graphs.  Each of these cases is represented by a distinguished graph drawn in the plane whose fundamental domain contains a subset of the edges drawn in Fig.~\ref{fig:labels}.

Let us call a graph {\em admissible} if it is periodic, connected, and planar and has two vertices per fundamental domain and no loops.
We argue in this section that each admissible graph is isomorphic to one of sixteen graphs.  A loop may be added to any of these graphs without losing planarity.
The strategy is to show that each admissible graph has one of three special periodic tilings of the plane as a subgraph, where the boundaries of the tiles contain certain vertices and their shifts.  These special graphs, denoted by $\GA$, $\GB$, and $\GC$, are drawn in Figs.~\ref{fig:GminA} and~\ref{fig:GminBC}.  New edges are then added to the tiles to produce all admissible graphs, which are depicted in Figs.~\ref{fig:GraphsA}, \ref{fig:GraphsB}, and~\ref{fig:GraphsC}.

Recall the set decomposition $\VG=\Vert_1\mathring\cup\Vert_2$ of the vertices of $\Gamma$ into the two disjoint sets of equivalent vertices (the two orbits of the $\ZZ^2$ action).  For $i\in\{1,2\}$, let $\Gamma_i$ denote the subgraph induced by $\Vert_i$, that is, the graph with vertex set $\Vert_i$ and edge set containing those edges in $\EG$ both of whose vertices are in~$\Vert_i$.

There are three disjoint cases to consider.
\begin{itemize}
\addtolength{\itemindent}{3em}
  \item[\bfseries Case A.] $\Gamma$ is bipartite with respect to the decomposition
  $\VG=\Vert_1\mathring\cup\Vert_2$, that is, both $\Gamma_1$ and $\Gamma_2$ have no edges.
  \item[\bfseries Case B.] Either $\Gamma_1$ or $\Gamma_2$ has {\color{black}(or both have)} one edge per fundamental domain, and neither has more than one edge per fundamental domain.
  \item[\bfseries Case C.] Either $\Gamma_1$ or $\Gamma_2$ has at least two edges per fundamental domain.
\end{itemize}
In each of these cases, we determine a minimal subgraph that $\Gamma$ must contain, which will be $\GA$, $\GB$, or $\GC$.  Afterwards, we will find all possible admissible graphs containing that subgraph.

\subsection{Minimal subgraphs}

For each Case $\text{X}\in\{\text{A}, \text{B}, \text{C}\}$, we determine a minimal admissible graph $\GX$.  The graph $\GX$ satisfies the conditions of Case X and has the property that any graph $\Gamma$ satisfying the conditions of Case X must contain $\GX$ up to interchanging $\Vert_1$ and $\Vert_2$ and transformation by a matrix in $GL(2,\ZZ)$.

\medskip
{\bfseries Case A.} This construction is illustrated in Fig.~\ref{fig:GminA}.
$\Gamma$ must contain an edge $e_0$ connecting a vertex $v_1$ in~$\Vert_1$ to a vertex $v_2$ in~$\Vert_2$.  Set $W=\{v_1,v_2\}$.  Since $\Gamma$ is connected, it has an edge $e_1$ connecting a vertex in $W$ with a vertex in a translate $gW$ of $W$ for some nonzero $g\in\ZZ^2$.  
 The translates of the vertices $v_1$ and $v_2$ and the edges $e_0$ and $e_1$ by elements $ng\in\ZZ^2$, with $n\in\ZZ$, form an infinite periodic subchain $C_1$ of $\Gamma$, with vertices alternating between $\Vert_1$ and $\Vert_2$.
{\color{black} The $\ZZ^2$ orbits of $v_1$, $v_2$, $e_0$ and $e_1$ form an array of identical, disjoint copies of $C_1$ that contains all vertices of $\Gamma$.}
Since $\Gamma$ is connected, there is an edge $e_2$ connecting a vertex in $C_1$ with a vertex in a translate of $C_1$ (that is not equal to $C_1$ itself); {\color{black} and since $\Gamma$ is planar, this connection must be between adjacent chains. The $\ZZ^2$ orbits of $v_1$, $v_2$, $e_0$, $e_1$ and $e_2$ is a hexagonal lattice, and contains all vertices of $\Gamma$.}
\begin{figure}[h]
\centerline{
\scalebox{0.3}{\includegraphics{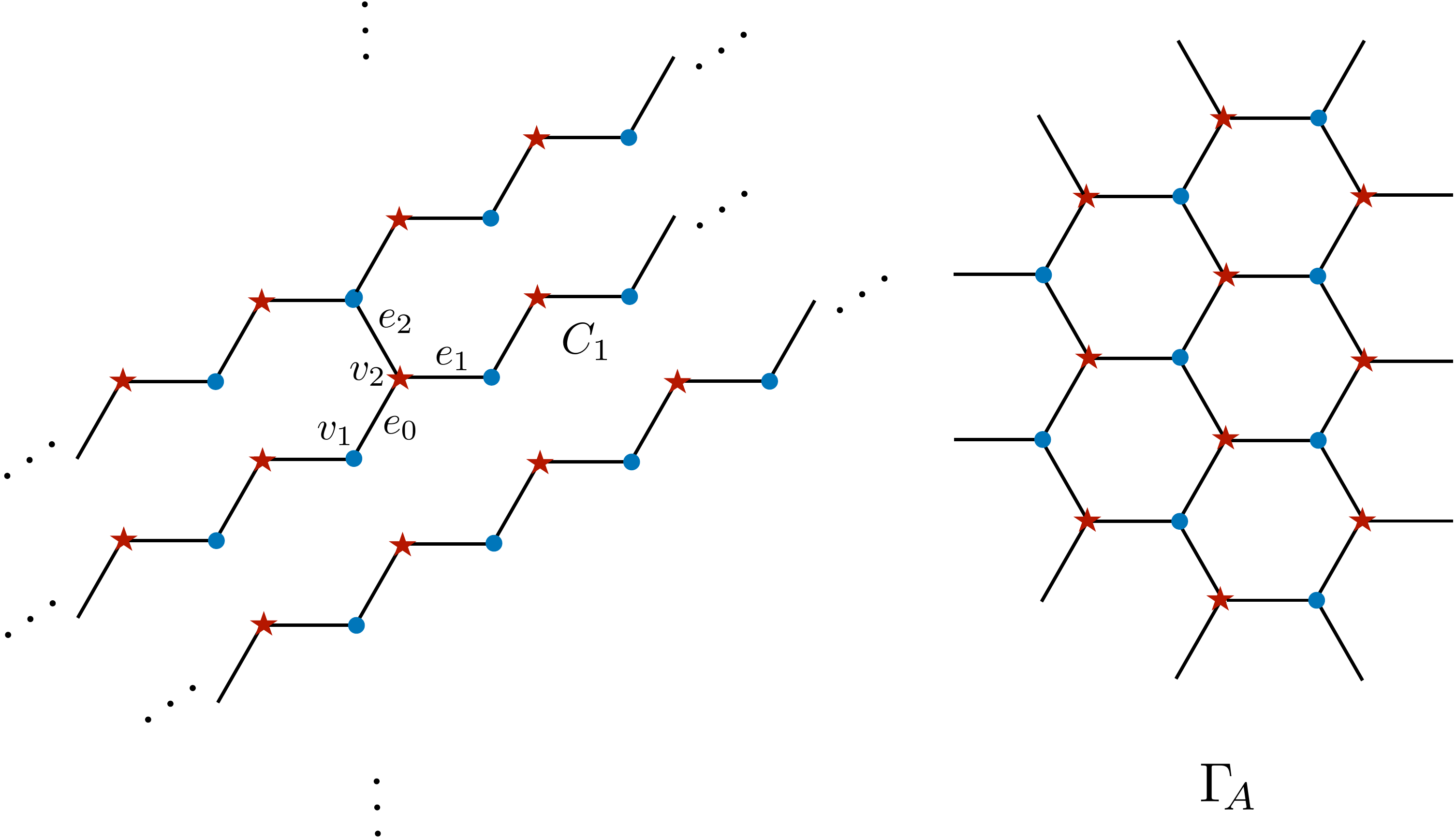}}
}
\caption{\small {\bfseries Case A.}  Every graph $\Gamma$ must have $\GA$ as a subgraph.  The chain $C_1$ is described in the text for the construction of $\GA$.  Vertices in $\Vert_1$ are depicted by blue dots, and vertices in $\Vert_2$ are depicted by red stars.}
\label{fig:GminA}
\end{figure}

\smallskip
{\bfseries Case B.} This construction is illustrated in Fig.~\ref{fig:GminBC} (left).
Let $\Gamma$ have an edge $e_1$ between a vertex $v_1\in\Vert_1$ and $gv_1$ for some nonzero $g\in\ZZ^2$.
The translates of this edge and its vertices by elements $ng\in\ZZ^2$, with $n\in\ZZ$, form a chain $C_1$ connecting a one-dimensional array of vertices in $\Vert_1$.  All the $\ZZ^2$ translates produce an array of disconnected translates of $C_1$, which is just the graph $\Gamma_1$ induced by $\Vert_1$ since Case B disallows more than one edge per fundamental domain of~$\Gamma_1$.
Since $\Gamma$ is connected {\color{black}and planar}, and no edge can join a vertex in $C_1$ with one of its disjoint translates, there is an edge $e_2$ connecting $v_1$ to a vertex $v_2\in\Vert_2$ and another edge~$e_3$ connecting $v_2$ to a vertex $hv_1\in\Vert_1$ in an adjacent translate of $C_1$.  The $\ZZ^2$ orbit of the vertices $v_1$ and~$v_2$ and the edges $e_1$, $e_2$, and $e_3$, form a connected periodic graph $\GB$, as depicted in Fig.~\ref{fig:GminBC}.  As in Case A, since $\Gamma$ is planar, $\GB$ contains all the vertices of $\Gamma$.

\smallskip
{\bfseries Case C.} This construction is illustrated in Fig.~\ref{fig:GminBC} (right).
Let $\Gamma$ have an edge $e_1$ connecting two vertices $v_1$ and $gv_1$ in $\Vert_1$.   
{\color{black} The $\ZZ^2$ orbits of $v_1$,  $gv_1$ and $e_1$ form an array of disjoint chains, which contain all vertices in $\Vert_1$. Since $\Gamma$ is connected and planar, and has an edge from $v_1$ to a vertex in $\Vert_1$ that is different from $v_1$, $gv_1$ and $g^{-1}v_1$, $v_1$ has to be connected to a vertex in $\Vert_1$ on an adjacent chain by an edge $e_2$. The $\ZZ^2$ orbits of $v_1$, $e_1$ and $e_2$ form a square lattice and contain all vertices in $\Vert_1$.}
When $\Gamma$ is drawn in the plane, this grid can be depicted as a square grid with each square containing exactly one vertex from $\Vert_2$ in its interior.
Since $\Gamma$ is connected, a vertex of $\Vert_1$ (and therefore every vertex of $\Vert_1$ by periodicity) must be connected by an edge to a vertex of $\Vert_2$.  Let $e_0$ be an edge connecting $v_1$ to a vertex $v_2\in\Vert_2$, such that $e_0$ does not cross any edge of the square grid.
Define $\GC$ to be this grid together with all the vertices in $\Vert_2$ and the $\ZZ^2$ translates of $e_0$.  This subgraph is depicted in Fig.~\ref{fig:GminBC}.

\begin{figure}[h]
\centerline{
\scalebox{0.34}{\includegraphics{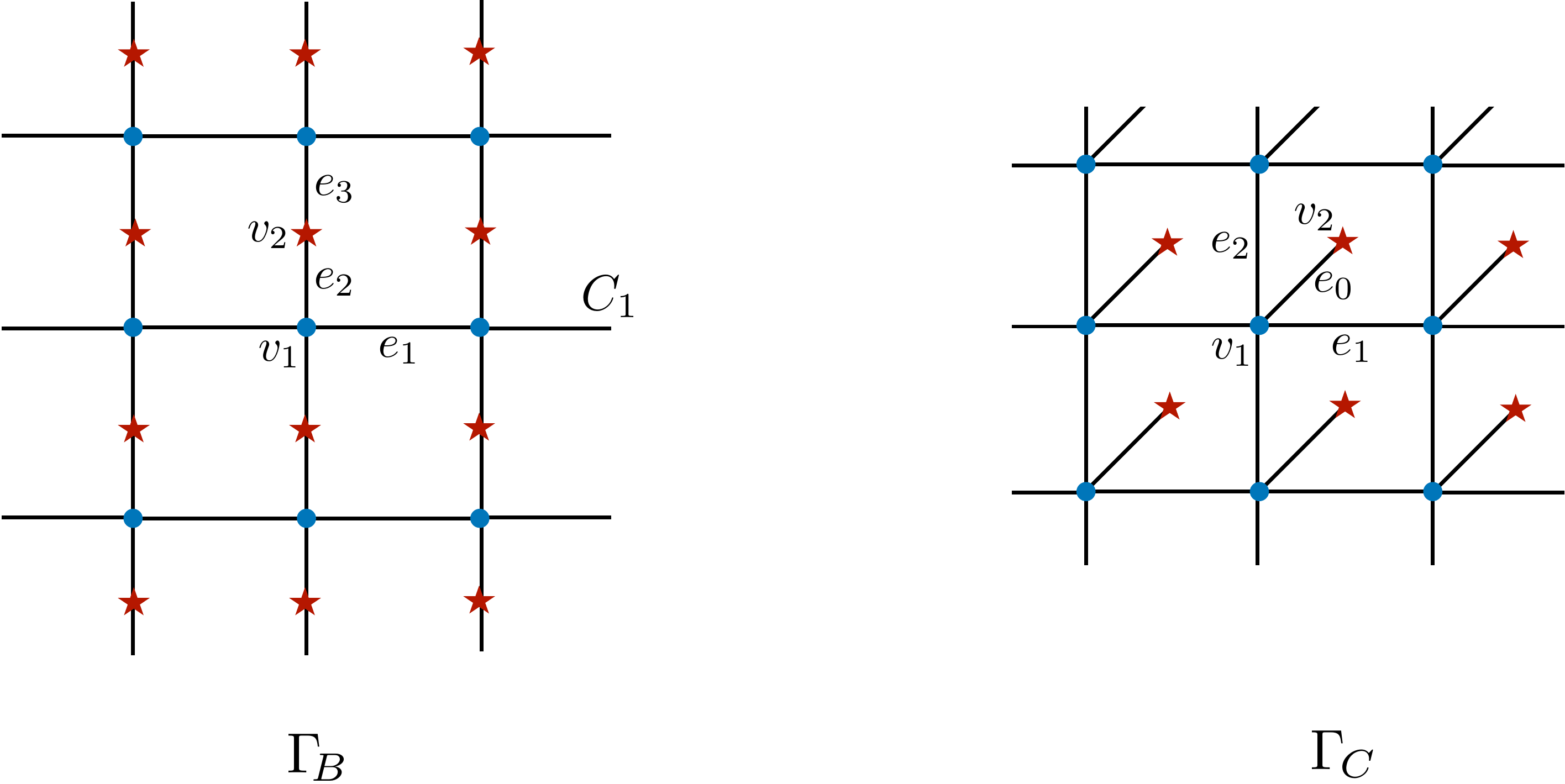}}
}
\caption{\small
{\bfseries Case B.}  Every graph $\Gamma$ must have $\GB$ as a subgraph.  Its construction is described in the text in the construction of $\GB$.
{\bfseries Case C.}  Every graph $\Gamma$ must have $\GC$ as a subgraph. 
Vertices in $\Vert_1$ are depicted by blue dots, and vertices in $\Vert_2$ are depicted by red stars.}
\label{fig:GminBC}
\end{figure}

\subsection{Equivalent graphs for each case}\label{sec:equivalence}

For each of the three cases, we determine up to graph isomorphism all of the planar periodic graphs, with two vertices per period, that contain the given subgraph.

\smallskip
{\bfseries Case A.}
The minimal graph $\GA$ (Fig.~\ref{fig:GminA}) has as a fundamental domain two vertices and three edges---one edge of each of the three orientations on the boundary of a hexagon.  Besides $\GA$ itself, the only way to produce an admissible graph satisfying Case~A and containing $\GA$ is to connect two opposite vertices within the hexagon in Fig.~\ref{fig:GraphsA}.  All three ways of doing this lead to periodic graphs that are isomorphic by rotation.  Thus, the two subcases A1 and A2 in Fig.~\ref{fig:GraphsA} represent the only equivalence classes for Case~A.  Each of these subcases can be identified with the graph in Fig.~\ref{fig:labels}, in which only certain edges are retained.

\begin{figure}[h]
\centerline{
\scalebox{0.4}{\includegraphics{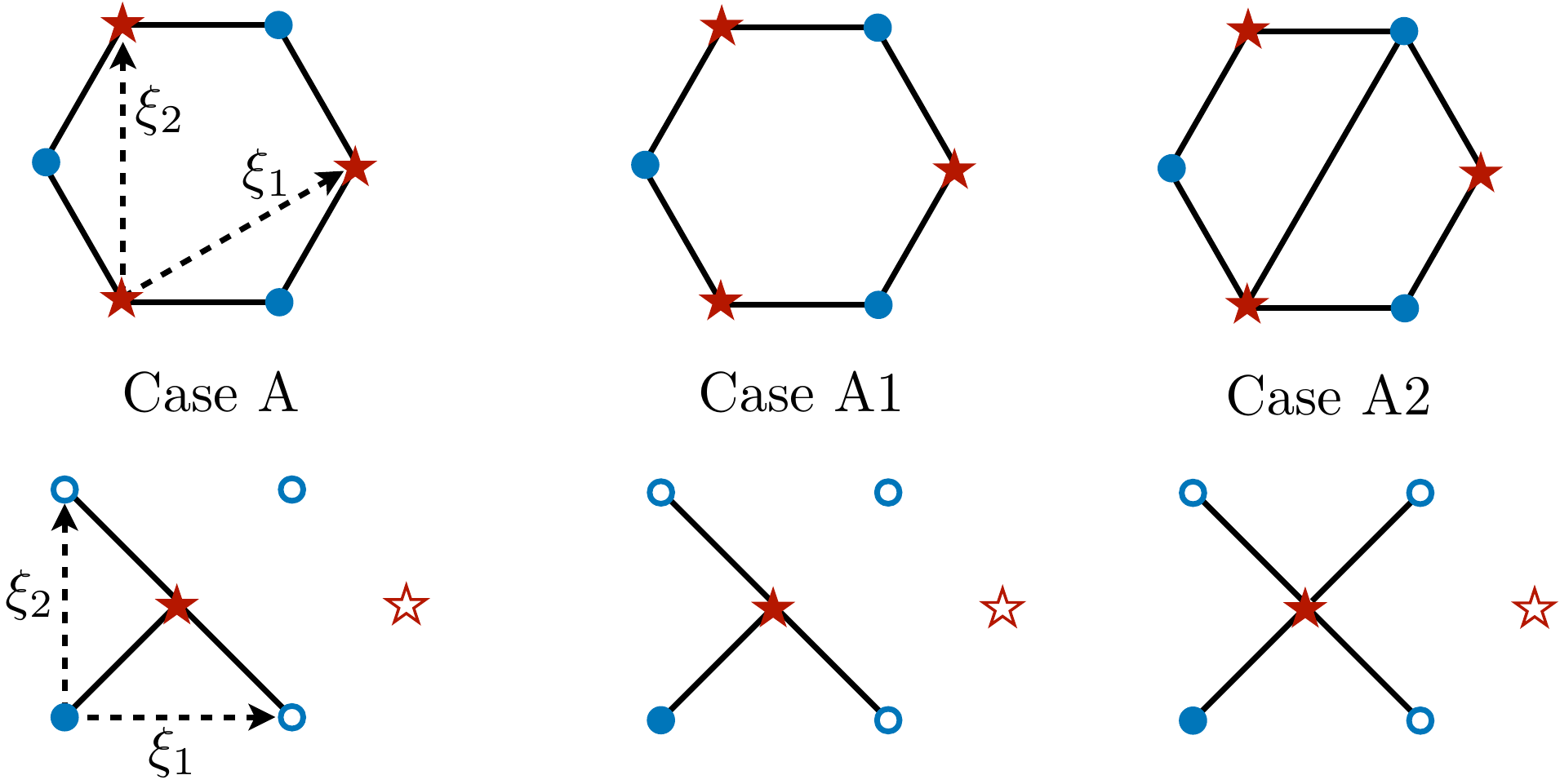}}
}
\caption{\small
The hexagon is one face of the minimal graph $\GA$ shown in Fig.~\ref{fig:GminA} for Case~A.
In each subcase, the edges in one fundamental domain are identified with edges in the standard representation in Fig.~\ref{fig:labels}.
}
\label{fig:GraphsA}
\end{figure}

\smallskip
{\bfseries Case B.} 
The minimal graph $\GB$ (Fig.~\ref{fig:GminBC}) has a fundamental domain consisting of two vertices and three edges on the boundary of a rectangular face, as illustrated in Fig.~\ref{fig:GraphsB}.  All other graphs of Case~B are obtained by joining vertices on the boundary of the face by edges passing through the face provided that no edges cross and no two vertices in $\Vert_1$ (blue circles) are connected.  We argue that there are six equivalence classes of isomorphic graphs, represented by the diagrams in Fig.~\ref{fig:GraphsB}.  First consider the graphs for which the vertices in $\Vert_2$ (stars) are not connected.  There are four ways in which a single edge connects a circle to a star, and all of these are isomorphic by reflections (Case~B2).  There are four ways to have two edges connecting circle to star (Case~B3), and all of them are isomorphic.  The one illustrated in Fig.~\ref{fig:GraphsB} yields only one of the other four upon reflection.  To obtain the other two (both interior edges sloped the same way), one has to shear by the matrix $\mat{0.8}{1}{1}{0}{1}\in GL(2,\ZZ)$ while keeping the two points in any shift of $\Vo$ (circle and star) rigid; this is illustrated in Fig.~\ref{fig:GBiso}.  The other three equivalence classes (Cases B4--B6) are obtained by adding an edge connecting the two stars to each of Cases B1--B3.

\begin{figure}[h]
\centerline{
\scalebox{0.46}{\includegraphics{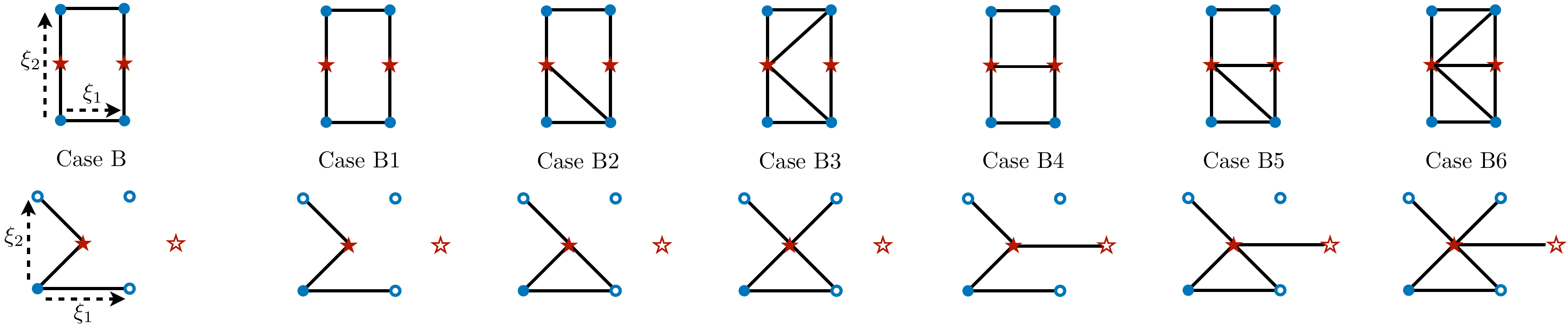}}
}
\caption{\small
The rectangle is one face of the minimal graph $\GB$ shown in Fig.~\ref{fig:GminBC} for Case~B.
In each of the six subcases, the edges in one fundamental domain are identified with edges in the standard representation in Fig.~\ref{fig:labels}.
}
\label{fig:GraphsB}
\end{figure}

\begin{figure}[h]
\centerline{
\scalebox{0.32}{\includegraphics{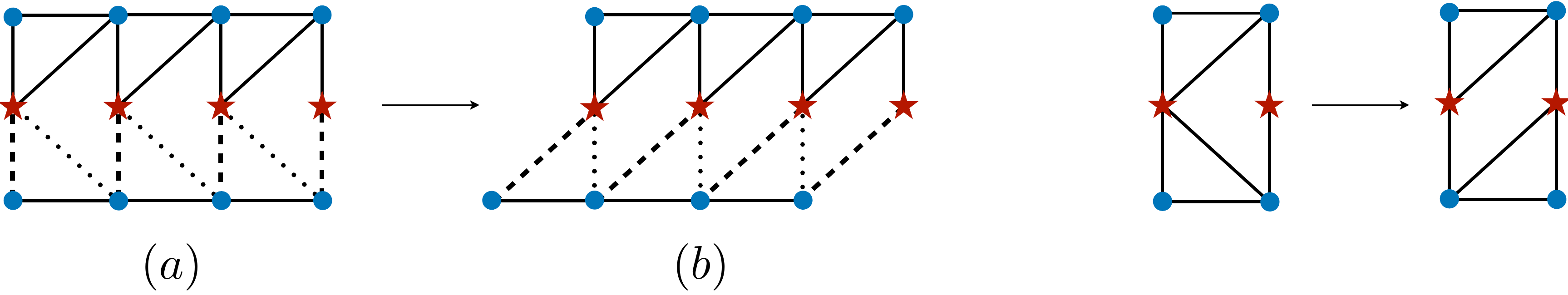}}
}
\caption{\small
In the description of Case~B in section~\ref{sec:equivalence}, it is mentioned how Case~B3 in Fig.~\ref{fig:GraphsB} can be represented by either of the two diagrams on the right through an application of a shear matrix to~$\Gamma$.  This is illustrated by the periodic shift that transforms the drawing (a) to the drawing (b) of the same graph $\Gamma$. 
}
\label{fig:GBiso}
\end{figure}

\smallskip
{\bfseries Case C.}
The minimal graph $\GC$ (Fig.~\ref{fig:GminBC}) has a fundamental domain consisting of {\color{black}two edges} and one vertex of the square, the vertex at the center, and the edge connecting to it, as depicted in Case~C1 of Fig.~\ref{fig:GraphsC}.  We argue that each admissible graph containing $\GC$ as a subgraph is isomorphic to one of the eight cases shown in Fig.~\ref{fig:GraphsC}.
Without adding any diagonal edge, one obtains Cases C1--C4, taking into account equivalence through rotation and reflection, and Case~C8.  The other cases are obtained by adding a diagonal edge to Cases C1--C4.  This results in only three new Cases C5--C7 because in fact Case~C4 with an extra diagonal is isomorphic to Case~C6.  This is seen through applying a shear and then a reflection, similarly to the argument for Case~B.  This is illustrated in Fig.~\ref{fig:GCiso}.

\begin{figure}[h]
\centerline{
\scalebox{0.35}{\includegraphics{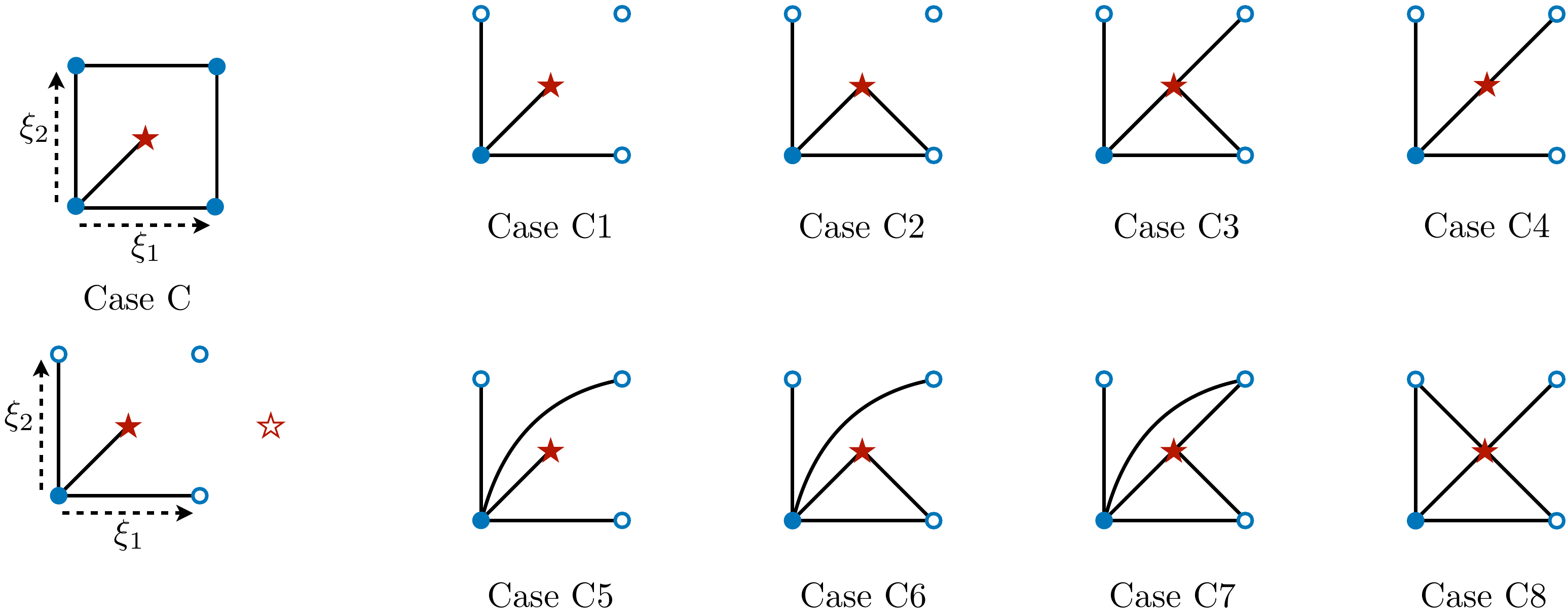}}
}
\caption{\small
The repeating unit of the minimal graph $\GC$ is a face of the subgraph $\Gamma_1$ (grid on circle vertices) plus an extra edge connecting a star vertex.  The edges of each of the eight subcases are drawn according to the standard representation in Fig.~\ref{fig:labels}.
}
\label{fig:GraphsC}
\end{figure}

\begin{figure}[h]
\centerline{
\scalebox{0.35}{\includegraphics{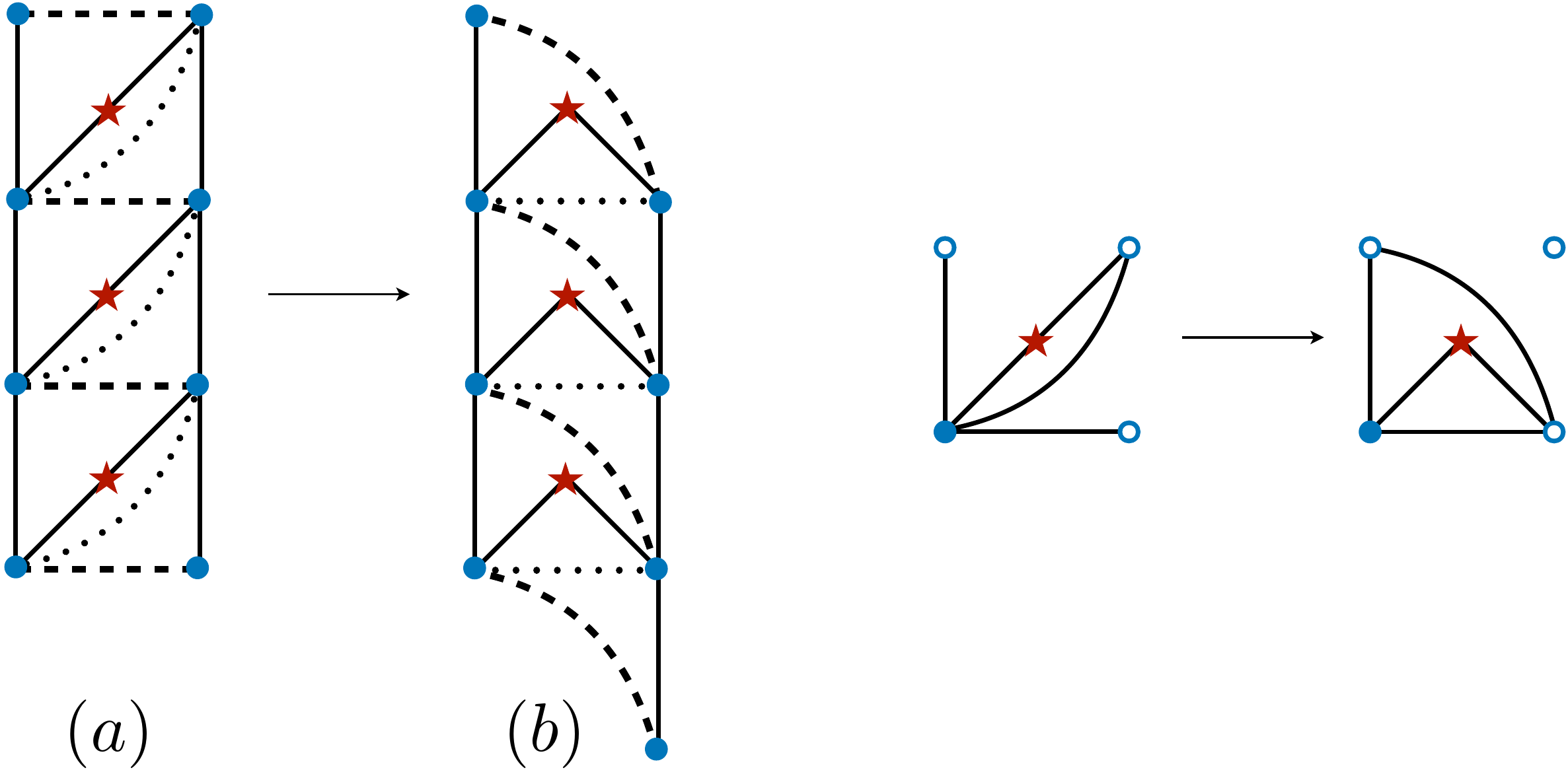}}
}
\caption{\small
In the description of Case~C in section~\ref{sec:equivalence}, it is mentioned how Case~C4 with an extra diagonal is isomorphic to Case~C6.  This is illustrated, up to reflections, by the periodic shift that transforms the drawing (a) to the drawing (b) of the same graph $\Gamma$.
}
\label{fig:GCiso}
\end{figure}

\section{(Ir)reducibility of the Fermi surface} 

Let the graph variables not corresponding to loops be denoted by the multi-variable $S=(a_1,a_2,a_3,b_0,c_1,d_1,d_2,d_3)$, and let $\bar S$ denote the complex conjugates as independent variables.
A graph operator of any of the sixteen types is determined by a choice of values of the variables $a_0,c_0,S$.

We argue that, without sacrificing the generality of our analysis, we may take $c_0=0$.  Indeed, with the alternatives (i) and (ii) in Theorem~\ref{thm:irreducible} in mind, let $S$ be fixed at some arbitrary values.  For any $\alpha\in\CC$, factorability of $D(z_1,z_2,E)$ for all energies $E$ with $a_0=\alpha$ and $c_0=0$ is equivalent to factorability for all energies $E$ for any choice of $a_0$ and $c_0$ with $a_0-c_0=\alpha$; this is evident from the definition of $D$ in (\ref{D}) and (\ref{AzE}).  Similarly, non-factorability of $D(z_1,z_2,E)$ at all but finitely many energies $E$ with $a_0=\alpha$ and $c_0=0$ is equivalent to non-factorability at all but finitely many energies $E$ for any choice of $a_0$ and $c_0$ with $a_0-c_0=\alpha$.  Therefore, we will henceforth take
\begin{equation}
  c_0 \;=\; 0.
\end{equation}

\subsection{Alternatives (i) and (ii) in Theorem~\ref{thm:irreducible}}\label{sec:alternatives}

In this subsection, we justify that either (i) or (ii) in Theorem~\ref{thm:irreducible} holds.

The dispersion function $D(z_1,z_2,E)$ is a Laurent polynomial in $(z_1,z_2)$ with coefficients that are polynomials in
$(E,a_0,S,\bar S)$.  There is a nontrivial factorization $D(z_1,z_2,E)=D_1(z_1,z_2,E)D_{\color{black}2}(z_1,z_2,E)$ at given values of $(E,a_0,S)$ if and only if a certain finite set $P_1$ of polynomials  {\color{black}($P_1$ is $GB_C(I)$ as defined precisely in Step~7 of Section~\ref{sec:algorithm})} in the variables $(E,a_0,S,\bar S)$ vanishes at those values, with the values $\bar S$ being numerically conjugate to the values $S$.

Let $P_2$ be the set of all coefficients of the polynomials in $P_1$, viewed as polynomials in $E$  {\color{black}($P_2$ is $\text{Coeff}_E$ as defined precisely in Step~8 of in Section~\ref{sec:algorithm})}. The elements of $P_2$ are polynomials in $(a_0,S,\bar S)$.  First, let $(a_0,S,\bar S)$ take on values, with $S$ and $\bar S$ being numerical conjugates, at which all elements of $P_2$ vanish.  Then all elements of $P_1$ vanish at all energies $E$, which implies that $D(z_1,z_2,E)$ admits a nontrivial factorization.  Alternatively, let $(a_0,S,\bar S)$ take on values at which some element of $P_2$ does not vanish.  Thus some element of $P_1$ is a nonzero polynomial in $E$.  For any value of $E$ that is not a root of this polynomial, there is then an element of $P_1$ that does not vanish at the values $(E,a_0,S)$.  This implies that $D(z_1,z_2,E)$ is not factorable at such $(E,a_0,S)$.  Thus, in the second alternative, $D(z_1,z_2,E)$ is not factorable at each value of $E$ that is not a root of a certain polynomial.

\subsection{Algorithm}\label{sec:algorithm}
We present an algorithm for determining reducibility of the Fermi surface, which leads to a proof of Theorem~\ref{thm:irreducible}.  We implement it in later sections, with a combination of Mathematica{\textregistered} and by-hand computations.

\smallskip
\noindent
Loop I: do the following steps for each of the sixteen graphs shown in Figs.~\ref{fig:GraphsA}, \ref{fig:GraphsB}, and~\ref{fig:GraphsC}.
\smallskip

1.  Let $\Gamma$ denote one of the sixteen cases shown in Figs.~\ref{fig:GraphsA}, \ref{fig:GraphsB}, and~\ref{fig:GraphsC}. Let $S_{\Gamma}\subset S$ be the set of variables corresponding to edges of $\Gamma$.  That is, the values of the variables in $S_{\Gamma}$ are nonzero but are otherwise undetermined and the values of the variables in $S\backslash S_\Gamma$ are zero. 
Set ${\color{black}K_{\Gamma}} = |S_{\Gamma}|$ and let $\bar S_{\Gamma}$ denote the set of the complex conjugates of the variables in $S_{\Gamma}$, viewed as independent from $S_\Gamma$.
Compute $D(z_1,z_2,E):= \det (\hat A (z) - E)$, where $\hat A(z)$ is defined in~\eqref{AzE} with the variables in $S\backslash S_{\Gamma}$ and their complex conjugates set equal to zero.
$D(z_1,z_2,E)$ is a Laurent polynomial in $z_1$ and $z_2$
with coefficients that are polynomials in the variables $(E,a_0,S_{\Gamma},\bar S_{\Gamma})$. 
It possesses the symmetry
\begin{equation}\label{symmetry}
  D(z_1,z_2) = \bar D(z_1^{-1},z_2^{-1}),
\end{equation}
in which $\bar D$ indicates conjugation of the coefficients $(S_{\Gamma},\bar S_{\Gamma})$ and not conjugation of the variables $(z_1,z_2)$.

2. Compute $P_c: =  z_1^{m_0} z_2^{n_0} D$, where $m_0$ and $n_0$ are the highest powers of $z_1$ and $z_2$ in the monomials of~$D$, which, by (\ref{symmetry}) are also minus the lowest powers of $z_1$ and $z_2$ in the monomials of~$D$.  In all cases we compute below, it turns out that, for all choices of $a_0\in\RR$ and $S_\Gamma\in(\CC^*)^{\color{black}K_{\Gamma}}$, $P_c$ is a polynomial in $z_1$ and $z_2$ with neither $z_1$ nor $z_2$ as a common factor for all but finitely many values of~$E$.
Let $M$ denote the degree of $P_c$. Factoring $D$ nontrivially (neither factor is a monomial) into Laurent polynomials in $z_1$ and $z_2$ is equivalent to factoring $P_c$ into nonconstant polynomials in $z_1$ and $z_2$.

3. The polynomial $P_c$ is of the form $P_c = \sum_{(m,n)\in \mathcal{I}_c} c_{m,n} z_1^mz_2^n$, where $\mathcal{I}_c = \left\{(m,n)\in U: {\color{black}m+n}\leq M\right\}$ and $U=\mathbb Z_{\geq0}\times\mathbb Z_{\geq0}$. 
Compute $c_{m,n}$ as a polynomial in the variables $(E,a_0,S_{\Gamma},\bar S_\Gamma)$. 
Identify the set $C_Z$ of ``always zero" coefficients $c_{m,n}$ and the set $C_N$ of ``almost never zero" coefficients $c_{m,n}$.
More precisely, $(m,n)\in C_Z$ if and only if $c_{m,n}=0$ for all real $E$ and $a_0$ and $S_{\Gamma} \in (\mathbb C^*)^{\color{black}K_{\Gamma}}$, that is, $c_{m,n}$ is identically zero as a polynomial in these variables;
and $(m,n)\in C_N$ if and only if  for any real $a_0$ and $S_{\Gamma} \in (\mathbb C^*)^{\color{black}K_{\Gamma}}$, $c_{m,n}=0$ for at most finitely many real $E$, that is, $c_{m,n}$ is a nonzero polynomial in $E$ for each choice of real $a_0$ and $S_{\Gamma} \in (\mathbb C^*)^{\color{black}K_{\Gamma}}$.

4. The polynomial $P_c$ factors if and only if there exist polynomials $P_a$ and $P_b$ of the form $P_a = \sum_{(m,n)\in \mathcal{I}_a} a_{m,n} z_1^mz_2^n$ and $P_b = \sum_{(m,n)\in \mathcal{I}_b} b_{m,n} z_1^mz_2^n$ such that
\begin{equation}\label{eq:Icomp}
c_{m,n} \;=\; \sum_{(i,j)+(k,l)=(m,n), (i,j)\in \mathcal{I}_a, (k,l)\in \mathcal{I}_b  } a_{i,j}b_{k,l}  \qquad \forall \; (m,n) \in U,
\end{equation} 
in which $\mathcal{I}_a=\mathcal{I}_b = \left\{(m,n)\in U:m+n\leq M\!-\!1 \right\}$. 
To decrease the computational cost, the index sets $\mathcal{I}_a$ and $\mathcal{I}_b$ can be reduced by exploiting the relations (\ref{eq:Icomp}).  Consider all $(m,n)\in C_Z$ such that the right-hand side of (\ref{eq:Icomp}) is a monomial in the $a$ and $b$ coefficients; each of these relations gives a product of coefficients equal to zero.
For each of these $(m,n)$, one of the coefficients (say $a_{i,j}$) in the monomial must be set to zero  and its index removed from the corresponding index set (say, $(i,j)$ removed from $\mathcal{I}_a$).  
There are different ways in which this can be done, and each way yields a reduced pair of index sets $(\mathcal{I}_a,\mathcal{I}_b)$ with a corresponding factorization $P_c=P_aP_b$.  Each of these new pairs of index sets leads to new relations obtained by setting the appropriate coefficients in (\ref{eq:Icomp}) to zero.  The new relations may again contain monomials in the right-hand side, and one can further reduce the index set in different ways.  This process can be continued to obtain a tree of index pairs $(\mathcal{I}_a,\mathcal{I}_b)$ with terminal nodes being those for which the relations (\ref{eq:Icomp}) have no monomial right-hand sides for all $(m,n)\in C_Z$.
Some of these index pairs can be ruled out, namely those for which the right-hand side of (\ref{eq:Icomp}) vanishes identically for some $(m,n)\in C_N$.
In practice, we terminate any part of the tree whenever this occurs.

Let $N_{\Gamma}$ be the number of reduced index pairs. (We find that, for each of the sixteen cases there are zero to six reduced index pairs.)
Obviously, for any fixed real $a_0$ and $S_{\Gamma} \in (\mathbb C^*)^{\color{black}K_{\Gamma}}$,  $P_c$ factors at energy $E$ if and only if $P_c$ factors into $P_a$ and $P_b$ with one of the reduced index pairs at energy $E$.

\smallskip
\noindent
Loop II: do Steps 5--8 for each of the $N_{\Gamma}$ reduced index pairs $(\mathcal{I}_a,\mathcal{I}_b)$ of index sets obtained in Step 4.
\smallskip

5.
Construct the set 
\begin{equation}
I :=  \left\{c_{m,n} - \sum_{(i,j)+(k,l)=(m,n), (i,j)\in \mathcal{I}_a, (k,l)\in \mathcal{I}_b } a_{i,j}b_{k,l} ,  \quad \forall \quad (m,n) \in U \right\}.
\end{equation} 
$I$ is a finite set.
Recall that for given $\left\{c_{m,n}:(m,n)\in \mathcal{I}_c \right\}$, $P_c$ factors into $P_a = \sum_{(m,n)\in \mathcal{I}_a} a_{m,n} z_1^mz_2^n$ and $P_b = \sum_{(m,n)\in \mathcal{I}_b} b_{m,n} z_1^mz_2^n$ if and only if there exist $\left\{a_{m,n}:(m,n)\in \mathcal{I}_a \right\}$ and $\left\{b_{m,n}:(m,n)\in \mathcal{I}_b \right\}$ such that every element of $I$ vanishes.

6. Let $\ell=|\mathcal{I}_a|+|\mathcal{I}_b|$, and fix an ordering $O$ of $\left\{a_{i,j}, b_{k,l}, c_{m,n}: (i,j)\in \mathcal{I}_a, (k,l)\in \mathcal{I}_b, (m,n)\in \mathcal{I}_c \right\}$, where the variables $\left\{c_{m,n}:(m,n)\in \mathcal{I}_c \right\}$ are placed at the end. Compute the Groebner basis of the $(\ell\!+\!1)$th elimination ideal of $I$ with the ordering $O$, and denote it by $GB_C(I)$. By standard nonlinear elimination theory~\cite{CoxLittleOShea2007}, the Groebner basis $GB_C(I)$ is a set of polynomials in $\left\{c_{m,n}:(m,n)\in \mathcal{I}_c \right\}$, whose zero set is the projection of the zero set of $I$ onto $\left\{c_{m,n}:(m,n)\in \mathcal{I}_c \right\}$.
That is,  $P_c$ factors into some polynomials $P_a$ and $P_b$ with reduced index sets $\mathcal{I}_a$ and $\mathcal{I}_b$ if and only if  $\left\{c_{m,n}:(m,n)\in \mathcal{I}_c \right\}$ is a root of all elements of $GB_C(I)$. Note that the $\left\{c_{m,n}:(m,n)\in \mathcal{I}_c \right\}$ being a root of $GB_C(I)$ is independent of the choice of the ordering $O$, as long as $\left\{c_{m,n}:(m,n)\in \mathcal{I}_c \right\}$ are placed at the end.

7. Compute the set $GB$ of polynomials in $(E,a_0,S_\Gamma,\bar S_\Gamma)$, which is obtained from $GB_C(I)$ by replacing each $c_{m,n}:(m,n)\in \mathcal{I}_c $ by its expression as a polynomial in $(E,a_0,S_\Gamma,\bar S_\Gamma)$, according to Step 3. 
Compute $\text{Coeff}_E$, which is defined to be the totality of all coefficients of the elements of $GB$ when treated as polynomials in the variable $E$.
It can be seen that $\text{Coeff}_E$ is a set of polynomials in the variables $(a_0,S_\Gamma,\bar S_\Gamma)$. 

8. Determine whether all elements of $\text{Coeff}_E$ vanish for some allowed $a_0$ and $S_{\Gamma}$, where $S_{\Gamma}$ is an element in $(\mathbb R^+)^{\color{black}K_{\Gamma}}$ or $(\mathbb R^*)^{\color{black}K_{\Gamma}}$ or $(\mathbb C^*)^{\color{black}K_{\Gamma}}$, with $S_\Gamma$ and $\bar S_\Gamma$ related through conjugation.  When a choice of values of $(a_0,S_{\Gamma})$ is a root of all elements of $\text{Coeff}_E$, then $P_c$ factors into $P_a$ and $P_b$ of types $\mathcal{I}_a$ and $\mathcal{I}_b$ for all real~$E$. When a choice of values of $(a_0,S_{\Gamma})$ is not a root of all elements of $\text{Coeff}_E$ then $P_c$ factors into $P_a$ and $P_b$ of types $\mathcal{I}_a$ and $\mathcal{I}_b$ for at most finitely many values of $E$.  This is because, for such $(a_0,S_{\Gamma})$, $GB$ contains a nonzero polynomial $q$ in the variable $E$, whose value will be nonzero for all $E$ except for the finite set of roots of $q$.  (Note that if $\text{Coeff}_E$ never identically vanishes for any real $a_0$ and $S_{\Gamma}\in (\mathbb C^*)^{\color{black}K_{\Gamma}}$, then it never identically vanishes for any real $a_0$ and $S_{\Gamma}\in (\mathbb R^*)^{\color{black}K_{\Gamma}}$ or $S_{\Gamma}\in (\mathbb R^+)^{\color{black}K_{\Gamma}}$.)

\smallskip
Here is a summary of the sets of polynomials involved:
\begin{equation*}
\renewcommand{\arraystretch}{1.2}
\left.
  \begin{array}{ll}
    I & \text{contains the polynomials in $({\color{black}c_{m,n}},a_{i,j},b_{k,l})$ whose vanishing is equivalent to a factorization $P_c=P_aP_b$.} \\
  GB_C(I) & \text{is a Groebner basis for $I$ that eliminates $(a_{i,j},b_{k,l})$ and retains $({\color{black}c_{m,n}})$.} \\
  GB & \text{results from substituting ${\color{black}c_{m,n}}$ in $GB_C(I)$ by its expression in the variables $(E,a_0,S_\Gamma,\bar S_\Gamma)$.} \\
  \mathrm{Coeff}_E & \text{contains the coefficients of the elements of $GB$ as polynomials in $E$; they are polynomials in $(a_0,S_\Gamma,\bar S_\Gamma)$.}
  \end{array}
\right.
\end{equation*}

\subsection{Example for the algorithm}\label{sec:example}
We apply the algorithm to Case C5 as an example. 

\smallskip
\noindent
Loop I.
\smallskip

1. Case C5 has edges $S_\Gamma=\left\{ a_1, a_2, a_3, b_0\right\} \in (\mathbb C^*)^4$, and ${\color{black}K_{\Gamma}}=|S_\Gamma| = 4$. Thus
\begin{equation}\label{eq:opC7}
  \hat A(z_1,z_2)- E \;=\;
  \mat{1.4}
  {-E+a_0 + a_1 z_1+ \bar a_1 z_1^{-1}+ {\color{black}a_2 z_2+ \bar a_2 z_2^{-1}} +a_3z_1z_2+\bar a_3z_1^{-1}z_2^{-1}}
  {b_0  }
  {\bar b_0 }
  {-E }.
\end{equation}
The highest powers of $z_1$ and $z_2$ in $D(z_1,z_2,E)$,
the determinant of \eqref{eq:opC7}, are $1$ and $1$.

2. We obtain $P_c = z_1z_2D$,
\begin{equation}
\begin{aligned}
P_c =& - a_3 z_1^2 z_2^2 E- a_1 z_1^2 z_2 E - a_2 z_1 z_2^2 E + z_1 z_2 (-|b_0|^2 {\color{black}- a_0} E + E^2) - \bar a_2 z_1 E - \bar a_1 z_2 E  - \bar a_3 E .  
\end{aligned}
\end{equation}

3. The degree of $P_c$ is $M=4$.  
The coefficients $c_{m,n}$, as polynomials in  $(E,a_0,S_{\Gamma},\bar S_\Gamma)$, are
\begin{equation}
\begin{aligned}\label{cines}
&c_{2,2}  =  - a_3 E , \quad
c_{2,1}  =  - a_1 E, \quad
c_{1,2}  =  - a_2 E,\quad
c_{1,1}  = -b_0\bar b_0  {\color{black}- a_0} E + E^2, \\
&c_{1,0}  =- \bar a_2 E, \quad
c_{0,1}  = - \bar a_1 E, \quad
c_{0,0}  = - \bar a_3 E.
\end{aligned}
\end{equation}
The index sets of zero and non-zero coefficients are
\begin{align}
C_Z &= \left\{ (4,0), (3,1), (1,3), (0,4), (3,0), (0,3), (2,0), (0,2)  \right\} \cup \left\{ (m,n)\in U: m+n>4  \right\},\\
C_N &= U\backslash C_Z.
\end{align}

4. Using the constraints in $C_Z$ and $\tilde C_N= \left\{ (2,2), (0,0)\right\} \subset C_N$, we obtain $N_{\Gamma}=3$ pairs of reduced index sets $\mathcal{I}_a$ and $\mathcal{I}_b$. The corresponding $P_a$ and $P_b$ are of the forms
\begin{align}
\label{eq:C7fact1}
&P_a = a_{1,1}z_1z_2 + a_{0,0},
\quad P_b = b_{1,1}z_1z_2 + b_{1,0}z_1 + b_{0,1} z_2 + b_{0,0};\\
\label{eq:C7fact2}
&P_a = a_{1,1}z_1z_2+ a_{0,1} z_2 + a_{0,0},
\quad P_b = b_{1,1}z_1z_2 + b_{1,0}z_1  + b_{0,0};\\
\label{eq:C7fact3}
&P_a = a_{0,1}z_2 + a_{0,0},
\quad P_b = b_{2,1}z_1^2z_2 + b_{1,1}z_1z_2 + b_{1,0}z_1 + b_{0,0}.
\end{align}

\smallskip
\noindent
For Loop II, we take Pair~\eqref{eq:C7fact1} as an example.
\smallskip

5. The set of coefficients of $P_c - P_aP_b$ in terms of $\left\{a_{i,j}, b_{k,l}, c_{m,n}: (i,j)\in \mathcal{I}_a, (k,l)\in \mathcal{I}_b, (m,n)\in \mathcal{I}_c \right\}$ is
\begin{equation}
\begin{aligned}\label{excab}
I = & \left\{c_{0,0}-b_{0,0} , c_{0,1}-b_{0,1}, c_{1,0}-b_{1,0}, c_{1,1}- a_{1,1}b_{0,0}, c_{1,2} - a_{1,1} b_{0,1}, c_{2,1}-a_{1,1}b_{1,0}, c_{2,2} - a_{1,1} b_{1,1}\right\}.
\end{aligned}
\end{equation}
{\color{black}Here, $a_{0,0}$ is set equal to $1$ without loss of generality.}

6.  Using the ordering $O = \left\{ a_{1,1}, a_{0,0}, b_{1,1}, b_{1,0}, b_{0,1}, b_{0,0}, c_{2,2}, c_{2,1}, c_{1,2}, c_{1,1}, c_{1,0}, c_{0,1}, c_{0,0}  \right\}$, we compute $GB_C(I)$ in Mathematica\textregistered:
\begin{equation}
\begin{aligned}\label{gbc}
GB_C(I) = & \big\{ -c_{1,0} c_{1,2} + c_{2,1} c_{0,1}, 
c_{0,0} c_{1,2}^2 - c_{1,1} c_{1,2} c_{0,1} + c_{2,2} c_{0,1}^2, \\
&\;\;\; c_{0,0} c_{2,1} c_{1,2} - c_{1,0} c_{1,1} c_{1,2} + c_{1,0} c_{2,2} c_{0,1}, 
c_{0,0} c_{2,1}^2 + c_{1,0}^2 c_{2,2} - c_{1,0} c_{2,1} c_{1,1} \big\}.
\end{aligned}
\end{equation}

7. Substitute the relations in \eqref{cines} for $c_{m,n}$ in \eqref{gbc} to obtain $GB$. 
The set of coefficients of $GB$ as polynomials in $E$ is
\begin{equation}\label{eq:8(2)}
\begin{aligned}
\text{Coeff}_E =& \big\{\, a_1\bar a_1+ a_2\bar a_2, -\bar a_1 a_2\, b_0\bar b_0, 
{\color{black} - a_0} \bar a_1 a_2 - \bar a_1^2 a_3 - a_2^2 \bar a_3,
\bar a_1 a_2, -a_2\bar a_2\, b_0\bar b_0, \\
&\;\;\; {\color{black} - a_0} a_2\bar a_2 - \bar a_1 \bar a_2 a_3 + a_1 a_2 \bar a_3,
a_2\bar a_2, a_1 \bar a_2 b_0\bar b_0, {\color{black}a_0} a_1\bar a_2 - \bar a_2^2 a_3 - a_1^2 \bar a_3, -a_1 \bar a_2 \big\}.
 \end{aligned}
\end{equation}
When seeking common solutions of the polynomials in $\text{Coeff}_E$, we keep in mind that the values of $S_\Gamma$ and $\bar S_\Gamma$ are complex conjugates of each other. 

8. The {\color{black}second} term in $\text{Coeff}_E$ is $- \bar a_1 a_2 |b_0|^2$, which can not be zero when $S_\Gamma \in (\mathbb C^*)^4$. 
This means that, for all choices of $S_\Gamma \in (\mathbb C^*)^4$, one of the elements of $GB$ is a nonzero polynomial in $E$.   So if $E$ is not in the finite set of roots of this polynomial, one of the elements of $GB$ does not vanish at $(E,S_\Gamma)$.
We conclude that, for all real $a_0$ and $S_\Gamma \in (\mathbb C^*)^4$, Case C5 factors into Pair~\eqref{eq:C7fact1} for at most finitely many values of $E$.

Repeating Steps 5--8 for the pair~\eqref{eq:C7fact2} reveals that $\text{Coeff}_E$ contains the element $a_1 a_2 a_3 |b_0|^2$. Thus for all $a_0$ and $S_\Gamma \in (\mathbb C^*)^4$, Case C5 factors into the pair~{\color{black}\eqref{eq:C7fact2}} for at most finitely many values of $E$.  Similarly, for the pair~\eqref{eq:C7fact3}, $\text{Coeff}_E$ contains the element $a_1 a_3 |b_0|^2$, and thus for all $a_0$ and $S_\Gamma \in (\mathbb C^*)^4$, Case C5 factors into the pair \eqref{eq:C7fact3} for at most finitely many $E$ values.

The conclusion in this example is that, for all $a_0$ and $S_\Gamma \in (\mathbb C^*)^4$, Case C5 factors for at most finitely many values of $E$.

\subsection{Proof of Theorem~\ref{thm:irreducible}(a): Irreducibility} \label{sec:proof1a}
Now we present the proof of Theorem~\ref{thm:irreducible} by showing the key results when the Algorithm is performed for each of the sixteen cases. 

\smallskip\noindent
Case C7.  
$S_{\Gamma} = \left\{a_1,a_2,a_3,b_0,d_1,d_3\right\}$. Thus ${\color{black}K_{\Gamma}}=6$ and
\begin{equation}
\begin{aligned}\label{eq:Pc_1}
P_c \;=\; &
(-b_0d_3-a_3E)z_1^2z_2^2 + (-b_0d_1-a_1E)z_1^2z_2 +(-\bar{d_1}d_3-a_2E)z_1z_2^2\\
&+(-|b_0|^2 - |d_1|^2 - |d_3|^2 {\color{black} -a_0 E} +E^2)z_1z_2 + {\color{black} (-d_1\bar{d}_3-\bar{a}_2E)z_1 +(-\bar{b}_0\bar{d_1}-\bar{a_1}E)z_2
-\bar{b}_0\bar{d}_3 - \bar{a}_3E}.
\end{aligned}
\end{equation}

\smallskip\noindent
Case C6.  
$S_{\Gamma}=\left\{a_1,a_2,a_3,b_0,d_1\right\}$ and $P_c$ is equal to \eqref{eq:Pc_1} with $d_3=0$.

\smallskip\noindent
Case C4.  
$S_{\Gamma}=\left\{a_1,a_2,b_0,d_3\right\}$ and $P_c$ is equal to \eqref{eq:Pc_1} with $a_3 =d_1= 0$.

\smallskip\noindent
Case C3.  
$S_{\Gamma}=\left\{a_1,a_2,b_0,d_1,d_3\right\}$ and $P_c$ is equal to \eqref{eq:Pc_1} with $a_3=0$.

\smallskip\noindent
Case B2.  
We use the isomorphic graph shown in Fig.~\ref{fig:GIso}. Then $S_{\Gamma}=\left\{a_1,b_0,d_1,d_3\right\}$ and $P_c$ is equal to \eqref{eq:Pc_1} with $a_2=a_3=0$.

\smallskip\noindent
Case A1.  
We use the isomorphic graph shown in Fig.~\ref{fig:GIso}. Then $S_{\Gamma}=\left\{b_0,d_1,d_3\right\}$ and $P_c$ is equal to \eqref{eq:Pc_1} with $a_1=a_2=a_3=0$.

\begin{figure}[h]
\centerline{
\scalebox{0.5}{\includegraphics{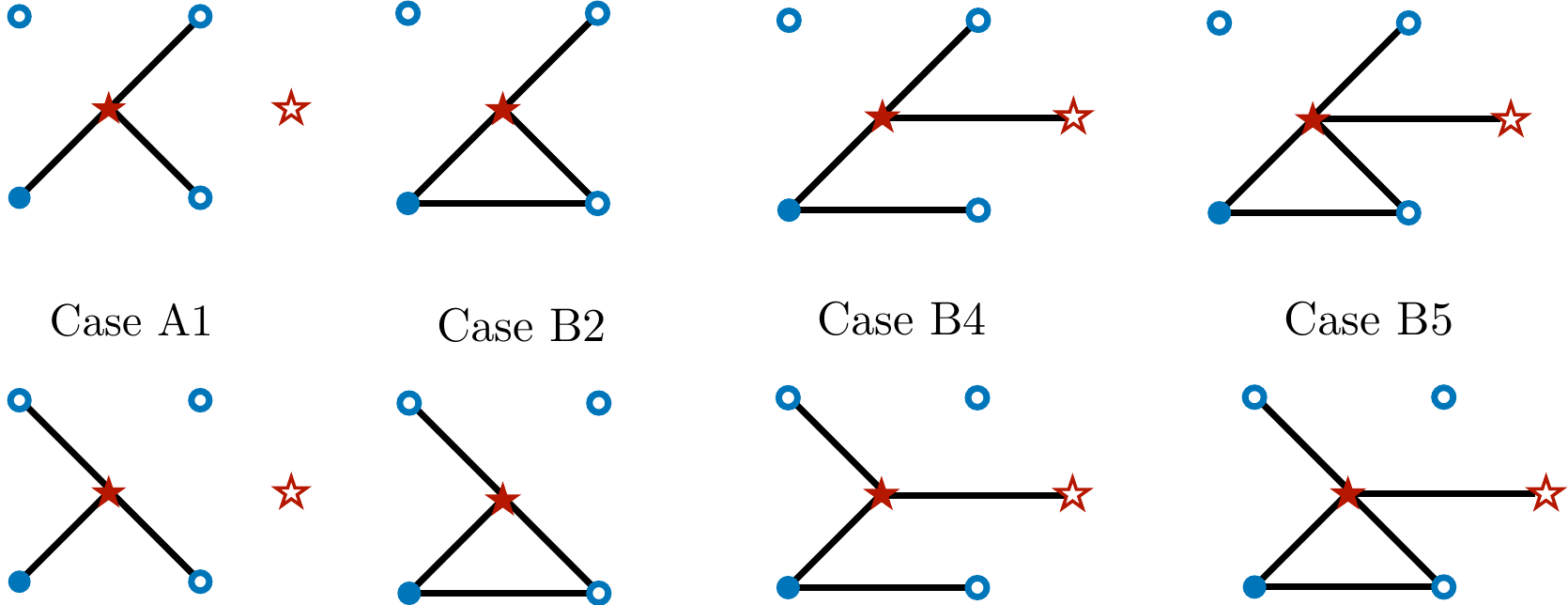}}
}
\caption{\small
These figures of a fundamental domain depict isomorphisms of the corresponding periodic graph, which are used for convenience in the analysis of the dispersion function $D(z_1,z_2;E)$.
}
\label{fig:GIso}
\end{figure}

In each of these cases, the coefficients in (\ref{eq:Pc_1}) do not identically vanish, and thus they all have a common index set~$C_Z$,
\begin{equation}\label{eq:czcn1a}
  C_Z = \left\{ (4,0), (3,1), (1,3), (0,4), (3,0), (0,3), (2,0), (0,2)\right\}  \cup \left\{ (m,n)\in U: m+n>4  \right\}.
\end{equation}
And since $|b_0d_3|^2+|a_3|^2\neq0$ for all these cases, they have a common subset $\tilde C_N$ of $C_N$,
\begin{equation}\label{eq:czcn1b}
\tilde C_N = \left\{ (2,2), (0,0) \right\}.
\end{equation}
The constraints from \eqref{eq:czcn1a} and \eqref{eq:czcn1b} allow three pairs of reduced index sets, whose corresponding $P_a$ and $P_b$ are of the forms \eqref{eq:C7fact1}, \eqref{eq:C7fact2} and \eqref{eq:C7fact3}. Thus, we can compute $\text{Coeff}_E$ for Case C7, and obtain $\text{Coeff}_E$ for the other cases by setting the appropriate variables equal to zero. Table~\ref{tab:nonv1} displays the nonvanishing entries in $\text{Coeff}_E$ for each pair in each of these cases.

\begin{table}[ht]
\centering
\begin{tabular}{|c|c|c|}
\hline
Class &Pair &   non-vanishing element in $\text{Coeff}_E$ for $a_0\in \mathbb R$ and $S_\Gamma \in (\mathbb C^*)^{\color{black}K_{\Gamma}}$\\
\hline
 C7  
  & \eqref{eq:C7fact1}& $\bar b_0 d_1 \bar d_1^3 d_3$\\
& \eqref{eq:C7fact2}&  $\bar a_1 \bar a_2 \bar a_3$ \\
& \eqref{eq:C7fact3}&  $b_0^3 \bar b_0 d_3$ \\
\hline
C6 
& \eqref{eq:C7fact1}&$b_0 \bar b_0 d_1 \bar d_1$\\
& \eqref{eq:C7fact2}&$\bar a_1 \bar a_2 \bar a_3$\\
& \eqref{eq:C7fact3}&$ -\bar a_1 \bar a_3$\\
\hline
 C4 
 & \eqref{eq:C7fact1}&$a_1\bar a_2$\\
& \eqref{eq:C7fact2}& {\color{black}explaned in text}\\
& \eqref{eq:C7fact3}&$-a_1\bar a_1$\\
\hline
C3 
& \eqref{eq:C7fact1}&$-a_1 \bar a_2$\\
& \eqref{eq:C7fact2}& explained in text \\
& \eqref{eq:C7fact3}&${\color{black}\bar d_1^3} \bar b_0 d_1 d_3$\\
\hline
B2 
& \eqref{eq:C7fact1}&$\bar b_0 d_1 \bar d_1^3 d_3$\\
& \eqref{eq:C7fact2}&$\bar a_1 \bar b_0 d_1 \bar d_3^2$\\
& \eqref{eq:C7fact3}&$b_0^3 \bar b_0 d_1 d_3$\\
\hline
A1 
& \eqref{eq:C7fact1}&$\bar b_0 \bar d_1^4 d_3$\\
& \eqref{eq:C7fact2}&$b_0^2 d_1 \bar d_1 d_3^2$\\
& \eqref{eq:C7fact3}&$b_0^3 \bar b_0 d_1 d_3$\\
\hline
\end{tabular}
  \caption{Nonvanishing entries in $\text{Coeff}_E$ for the pairs  \eqref{eq:C7fact1}, \eqref{eq:C7fact2} and \eqref{eq:C7fact3} for 
  Cases C7, C6, C4, C3, B2, and~A1.}
  \label{tab:nonv1}
\end{table}

{\color{black}
For Case C4,
$\text{Coeff}_E$ for pair~\eqref{eq:C7fact2} does not directly contain an element that is nonzero for 
$a_0\in\mathbb R$ and $\left\{a_1,a_2,b_0,d_3\right\}\in (\mathbb C^*)^4$. We show that two particular elements of $\text{Coeff}_E$ cannot be simultaneously zero for  $a_0\in\mathbb R$ and $\left\{a_1,a_2,b_0,d_3\right\}\in (\mathbb C^*)^4$.
These elements are
\begin{equation}\label{eq:coeff7C4}
\begin{aligned}
\left\{ a_1^2 a_2^2 + a_1 a_2 b_0 d_3,
-a_1 a_2 b_0^2 \bar b_0 d_3 + a_1 \bar a_1 b_0^2 d_3^2 + a_2 \bar a_2 b_0^2 d_3^2 - 
 a_1 a_2 b_0 d_3^2\bar d_3 \right\}.
 \end{aligned}
 \end{equation}
 The vanishing of the first element in  \eqref{eq:coeff7C4} requires that $a_1 = -b_0 d_3/a_2$ and $\bar a_1 = -\bar b_0 \bar d_3/\bar a_2$. With these constraints, the second element in \eqref{eq:coeff7C4}  becomes $b_0^2d_3^2 (|a_2|^2 + |b_0|^2)  (|a_2|^2 + |d_3|^2)/|a_2|^2 $, which is nonvanishing for for  $a_0\in\mathbb R$ and $\left\{a_1,a_2,b_0,d_3\right\}\in (\mathbb C^*)^4$.
 }

For Case C3, $\text{Coeff}_E$ for pair~\eqref{eq:C7fact2} does not directly contain an element that is nonzero for 
$a_0\in\mathbb R$ and $\left\{a_1,a_2,b_0,d_1,d_3\right\}\in (\mathbb C^*)^5$. We show that three particular elements of $\text{Coeff}_E$ cannot be simultaneously zero for  $a_0\in\mathbb R$ and $\left\{a_1,a_2,b_0,d_1,d_3\right\}\in (\mathbb C^*)^5$.
These elements are
\begin{equation}\label{eq:coeff7C3}
\begin{aligned}
\big\{ &\bar a_1^2 \bar a_2^2 + \bar a_1 \bar a_2 \bar b_0 \bar d_3,\\
&  2 \bar a_1 \bar a_2^2 \bar b_0 \bar d_1 {\color{black}- a_0} \bar a_1 \bar a_2 \bar b_0 \bar d_3 + 2 \bar a_1^2 \bar a_2 d_1 \bar d_3 + 
 \bar a_2 \bar b_0^2 \bar d_1 \bar d_3 + \bar a_1 \bar b_0 d_1 \bar d_3^2, \\
 &a_2^2 b_0^2 d_1^2 - a_1 a_2 b_0^2 \bar b_0 d_3 {\color{black} - a_0} a_2 b_0^2 d_1 d_3 + 3 a_1 a_2 b_0 d_1 \bar d_1 d_3 + a_1 \bar a_1 b_0^2 d_3^2 + a_2 \bar a_2 b_0^2 d_3^2 {\color{black} - a_0} a_1 b_0 \bar d_1 d_3^2 \\
 &+ b_0^2 d_1 \bar d_1 d_3^2 + a_1^2 \bar d_1^2 d_3^2 - a_1 a_2 b_0 d_3^2 \bar d_3 \big\}.
 \end{aligned}
 \end{equation}
The vanishing of the first two elements in  \eqref{eq:coeff7C3} requires that
$b_0 = -a_1 a_2/d_3$, $a_0 = {\color{black}( a_2^2 d_1 - \bar d_1 d_3^2 )}/(a_2 d_3)$. With these constraints, the third element in  \eqref{eq:coeff7C3} becomes $a_1^2 a_2^2 (|a_1|^2  + |d_3|^2) (|a_2|^2 + |d_3|^2)/(d_3 \bar d_3)$, which is nonvanishing for all  $a_0\in\mathbb R$ and $\left\{a_1,a_2,b_0,d_1,d_3\right\}\in (\mathbb C^*)^5$.

\smallskip
In Case C8,
$S_{\Gamma}=\left\{a_1,a_2,b_0,d_1,d_2,d_3\right\}$. Thus ${\color{black}K_{\Gamma}}=6$ and
\begin{equation}\label{eq:Pc8}
\begin{aligned}
P_c&=(-b_0d_3)z_1^2z_2^2+(-b_0d_1-\bar{d_2}d_3-a_1E)z_1^2z_2 +(-b_0d_2-\bar{d_1}d_3-a_2E)z_1z_2^2\\
&+(-d_1\bar{d_2})z_1^2+(-|b_0|^2 - |d_1|^2 -|d_2|^2 - |d_3|^2 {\color{black} - a_0 E} +E^2)z_1z_2 + (-\bar{d_1}d_2)z_2^2\\
&+(-\bar{b_0}\bar{d_2} -d_1\bar{d_3}-\bar{a_2}E)z_1 +(-\bar{b_0}\bar{d_1} -d_2\bar{d_3} -\bar{a_1}E)z_2
-\bar{b_0}\bar{d_3} 
\end{aligned}
\end{equation}

For case B3, 
$S_{\Gamma}=\left\{ {\color{black}a_1},b_0,d_1,d_2,d_3\right\}$ and $P_c$ is equal to \eqref{eq:Pc8} with ${\color{black}a_2=0}$.

For case A2, 
$S_{\Gamma}=\left\{b_0,d_1,d_2,d_3\right\}$ and $P_c$ is equal to \eqref{eq:Pc8} with $a_1=a_2=0$. 

In each of these three cases, the coefficients in (\ref{eq:Pc8}) do not identically vanish, and thus they all have a common index set~$C_Z$,
\begin{equation}
  C_Z = \left\{ (4,0), (3,1), (1,3), (0,4), (3,0), (0,3)\right\}  \cup \left\{ (m,n)\in U: m+n>4  \right\},
\end{equation}
and since $b_0d_3\neq0$ in these three cases, they have a common subset $\tilde C_N$ of $C_N$,
\begin{equation}\label{eq:czcn8}
\begin{aligned}
  \tilde C_N = \left\{ (2,2), (0,0) \right\}.
\end{aligned}
\end{equation}
The constraints from \eqref{eq:czcn8} allow three pairs of reduced index sets, corresponding to factorizations $P_aP_b$ of the forms
\begin{align}
\label{eq:C8fact1}
&\left(a_{1,1}z_1z_2 +a_{1,0}z_1+ a_{0,1}z_2 + a_{0,0}\right)
\left(b_{1,1}z_1z_2 +b_{1,0}z_1+ b_{0,1}z_2 + b_{0,0} \right),\\
\label{eq:C8fact2}
&\left(a_{0,2}z_2^2 + a_{0,1}z_2 + a_{0,0}\right)
\left(b_{2,0}z_1^2 + b_{1,0}z_1 + b_{0,0}\right),\\
\label{eq:C8fact3}
&\left(a_{0,1}z_2 + a_{0,0}\right)
\left(b_{2,1}z_1^2 z_2 + b_{2,0}z_1^2  + b_{1,1}z_1z_2 + b_{1,0}z_1 + b_{0,1} z_2 + b_{0,0}\right).
\end{align}
Therefore, we can compute $\text{Coeff}_E$ for Case C8 and obtain $\text{Coeff}_E$ for the other two cases by setting the corresponding variables equal to zero. Table \ref{tab:nonv2} displays the nonvanishing entries in $\text{Coeff}_E$ for each pair in each case.  The determinant $D$ in Case C8 may factor for all $E$ for some $a_0\in \mathbb R$ and $S_\Gamma \in (\mathbb C^*)^{\color{black}K_{\Gamma}}$; in fact it also may factor for all $E$ for some $a_0\in \mathbb R$ and $S_\Gamma \in (\mathbb R^*)^{\color{black}K_{\Gamma}}$, but it only factors for at most finitely many $E$ for $a_0\in \mathbb R$ and $S_\Gamma \in (\mathbb R^+)^{\color{black}K_{\Gamma}}$. This is discussed in more detail in Sec.~\ref{sec:irreducible}. The items listed for Case C8 in Table \ref{tab:nonv2} are items that do not vanish for any $a_0\in \mathbb R$ and $S_\Gamma \in (\mathbb R^+)^{\color{black}K_{\Gamma}}$.

\begin{table}[ht]
\centering
\begin{tabular}{|c|c|c|}
\hline
Class &Pair &   nonvanishing element in $\text{Coeff}_E$ for $a_0\in \mathbb R$ and $S_\Gamma \in (\mathbb R^+)^{\color{black}K_{\Gamma}}$\\
\hline
C8& {\color{black}\eqref{eq:C8fact1}}&$-a_1^2 a_2^3-a_1 a_2^2 d_1 d_2-a_1 a_2^2 b_0 d_3-a_2 b_0 d_1 d_2 d_3$\\
& {\color{black}\eqref{eq:C8fact2}}&$a_1 a_2+d_1 d_2$\\
& {\color{black}\eqref{eq:C8fact3}}&$a_1^2 a_2^2+a_1 a_2 d_1 d_2+a_1 a_2 b_0 d_3+b_0 d_1 d_2 d_3$\\
\hline
Class &Pair &   nonvanishing element in $\text{Coeff}_E$ for $a_0\in \mathbb R$ and $S_\Gamma \in (\mathbb C^*)^{\color{black}K_{\Gamma}}$\\
\hline
B3& {\color{black}\eqref{eq:C8fact1}}&$\bar a_1^2 \bar b_0 d_1 \bar d_2 \bar d_3$\\
&{\color{black} \eqref{eq:C8fact2}}&$-\bar b_0 \bar d_3$\\
&{\color{black}\eqref{eq:C8fact3}}&$\bar b_0 \bar d_1 d_2 \bar d_3$\\
\hline
  A2& {\color{black}\eqref{eq:C8fact1}}&$\bar b_0 d_1 \bar d_2 \bar d_3$\\
& {\color{black}\eqref{eq:C8fact2}}&$-\bar b_0 \bar d_3$\\
&{\color{black} \eqref{eq:C8fact3}}&$\bar b_0 \bar d_1 d_2 \bar d_3$\\
\hline
\end{tabular}
  \caption{Nonvanishing entries in $\text{Coeff}_E$ for Pairs  \eqref{eq:C8fact1}, \eqref{eq:C8fact2} and \eqref{eq:C8fact3} for Cases C8, B3, and A2.}
  \label{tab:nonv2}
\end{table}

Case B6:  $S_{\Gamma}=\left\{a_1,b_0,d_1,d_2,d_3, c_1\right\}$. Thus ${\color{black}K}=6$ and
\begin{equation}\label{eq:Pc_11}
\begin{aligned}
P_c\;=\; &a_1c_1z_1^4z_2 - b_0d_3z_1^3z_2^2+( {\color{black} a_0 c_1} -b_0d_1-\bar d_2d_3-(a_1+c_1)E)z_1^3z_2 +(-b_0d_2-\bar d_1d_3)z_1^2z_2^2\\
-&d_1\bar d_2z_1^3  +(-|b_0|^2 +\bar a_1c_1 + a_1\bar c_1 -|d_1|^2 -|d_2|^2-|d_3|^2  {\color{black} -a_0 E} +E^2)z_1^2z_2 - \bar{d_1}d_2 z_1z_2^2\\
 +& (  {\color{black} a_0 \bar c_1} -\bar b_0\bar d_2-d_1\bar d_3)z_1^2 +(-\bar b_0\bar d_1-d_2\bar{d}_3-(\bar{a}_1+\bar{c}_1)E)z_1z_2\\
-&\bar{b}_0\bar{d}_3z_1 + \bar{a}_1\bar{c}_1z_2.
\end{aligned}
\end{equation}
Since {\color{black}$a_1c_1\neq0$ and $b_0d_3\neq0$}, we obtain $C_Z$ and a subset $\tilde C_N$ of $C_N$:
\begin{equation}\label{eq:czcn11}
\begin{aligned}
&C_Z = \left\{ (5,0), (2,3), (1,4), (0,5), (4,0), (1,3), (0,4), (0,3), (0,2), (0,0)\right\}  \cup \left\{ (m,n)\in U: m+n>5  \right\} \\ 
&\tilde C_N = \left\{ (4,1), (3,2), (1,0), (0,1) \right\}.
\end{aligned}
\end{equation}
The constraints from \eqref{eq:czcn11} yield six pairs of reduced index sets, whose corresponding factorizations $P_aP_b$ are of the forms
\begin{align}
\label{eq:11fact1}
&\left(a_{1,1}z_1z_2 + a_{0,0}\right)
\left(b_{3,0}z_1^3 + b_{2,1}z_1^2 z_2 +b_{2,0}z_1^2+ b_{1,1}z_1z_2 + b_{1,0}z_1 + b_{0,1} z_2 \right),\\
\label{eq:11fact2}
&\left(a_{2,0}z_1^2 + a_{1,0}z_1 + a_{0,0}\right)
\left(b_{2,1}z_1^2 z_2 + b_{1,2}z_1 z_2^2 + b_{1,1}z_1z_2 + b_{1,0}z_1 + b_{0,1} z_2\right),\\
\label{eq:11fact3}
&\left(a_{2,0}z_1^2 + a_{1,1}z_1z_2 +  a_{1,0}z_1 + a_{0,0}\right)
\left(b_{2,1}z_1^2 z_2 + b_{1,1}z_1z_2 + b_{1,0}z_1 + b_{0,1} z_2\right),\\
\label{eq:11fact4}
&\left(a_{2,0}z_1^2 + a_{1,1}z_1z_2 +  a_{1,0}z_1 +  a_{0,1}z_2\right)
\left(b_{2,1}z_1^2 z_2 + b_{1,1}z_1z_2 + b_{1,0}z_1 + b_{0,0}\right),\\
\label{eq:11fact5}
& \left(a_{1,0}z_1 +  a_{0,0}\right)
\left(b_{3,1}z_1^3 z_2 +b_{2,2}z_1^2 z_2^2 +  b_{2,1}z_1^2 z_2 + b_{1,2}z_1z_2^2+b_{2,0}z_1^2  +b_{1,1}z_1z_2  + b_{1,0}z_1 + b_{0,1}z_2\right),\\
\label{eq:11fact6}
& \left(a_{1,0}z_1 +  a_{0,1}z_2\right)
\left(b_{3,1}z_1^3 z_2 +b_{2,1}z_1^2 z_2 +b_{2,0}z_1^2  +b_{1,1}z_1z_2  + b_{1,0}z_1 + b_{0,0}\right).
\end{align}
Nonvanishing elements of $\text{Coeff}_E$ for $a_0\in \mathbb R$ and $S_\Gamma \in (\mathbb C^*)^6$ are, for these six cases respectively, $a_1 c_1 d_1 \bar d_2$, $\bar b_0^2 d_1 \bar d_2 \bar d_3^2$, $a_1 b_0^2 c_1 \bar d_1 d_2 d_3^2$, $\bar b_0^2 d_1^2 \bar d_2^2 \bar d_3^2$, $b_0^2 d_1^2 d_2^2 d_3^2$, and $\bar a_1 \bar b_0 \bar c_1 \bar d_3$.


\smallskip
For Cases B4 and B5, using the isomorphic graphs shown in Fig.~\ref{fig:GIso}, 
we obtain {\color{black}
$S_{\Gamma}=\left\{a_1,b_0,d_3, c_1\right\}$ and $S_{\Gamma}=\left\{a_1,b_0,d_1,d_3, c_1\right\}$}, respectively. 
By setting  {\color{black}$d_1=d_2=0$ or $d_1=0$} in \eqref{eq:Pc_11}, we find that these two cases possess a common $C_Z$ and a common subset $\tilde C_N$ of $C_N$,
\begin{equation}\label{eq:czcn10}
\begin{aligned}
&C_Z = \left\{ (5,0), (2,3), (1,4), (0,5), (4,0), (1,3), (0,4),(3,0), (1,2), (0,3), (0,2), (0,0)\right\}  \cup \left\{ (m,n)\in U: m+n>5  \right\} \\ 
&\tilde C_N = \left\{ (4,1), (3,2), (1,0), (0,1) \right\}.
\end{aligned}
\end{equation}
The constraints in \eqref{eq:czcn10} do not allow any reduced index pairs, and so these cases are done.

\smallskip
For Case C1, $S_{\Gamma}=\left\{a_1,a_2,b_0\right\}$ and
\begin{equation}
P_c = -\bar a_2 z_1 E - \bar a_1 z_2 E - a_1 z_1^2 z_2 E - a_2 z_1 z_2^2 E + z_1 z_2 (-b_0 \bar b_0  {\color{black}-a_0E}+ E^2).
 \end{equation}
%
For Case C2, $S_{\Gamma}=\left\{a_1,a_2,b_0,d_1\right\}$ and
\begin{equation}
P_c =  -\bar a_2 z_1 E - a_2 z_1 z_2^2 E + z_1^2 z_2 (-b_0 d_1 - a_1 E) + z_2 (-\bar b_0 \bar d_1 - \bar a_1 E) + z_1 z_2 (-b_0 \bar b_0 - d_1 \bar d_1  {\color{black}-a_0E} + E^2).
 \end{equation}
%
For Case B1, $S_{\Gamma}=\left\{a_1,b_0,d_2\right\}$ and
\begin{equation}
P_c = -\bar b_0 \bar d_2 z_1 - b_0 d_2 z_1 z_2^2 - \bar a_1 z_2 E - a_1 z_1^2 z_2 E + z_1 z_2 (-b_0 \bar b_0 - d_2 \bar d_2  {\color{black}-a_0E}+ E^2).
 \end{equation}
In each of these cases $P_c$ is of degree $3$, and 
\begin{equation}\label{eq:czcn4}
\begin{aligned}
&C_Z = \left\{ (3,0), (0,3), (2,0), (0,2), (0,0)\right\}  \cup \left\{ (m,n)\in U: m+n>3  \right\}, \\ 
&\tilde C_N = \left\{ (2,1), (1,2)\right\} \subset C_N.
\end{aligned}
\end{equation}
The constraints in \eqref{eq:czcn4} do not allow any reduced index pairs, so we have taken care of these three cases.

\subsection{Proof of Theorem~\ref{thm:irreducible}(b): Reducibility for the tetrakis graph}\label{sec:irreducible} 

We examine the factorizations in Case C8, which is the tetrakis graph.  We saw in the previous section that this is the only case in which the Fermi surface can be reducible.  Recall 
that reducibility can result from three possible factorizations (\ref{eq:C8fact1}, \ref{eq:C8fact2}, \ref{eq:C8fact3}).
To determine the graph coefficients that realize these factorizations, we seek common roots $\left(a_0,a_1,a_2,b_0,d_1,d_2,d_3\right)$ of all the elements of $\text{Coeff}_E$ for each of these three pairs of factors.  For pairs~\eqref{eq:C8fact2} and~\eqref{eq:C8fact3}, we find all roots with $a_0$ real and $\left(a_1,a_2,b_0,d_1,d_2,d_3\right)$ complex, and for pair~\eqref{eq:C8fact1}, we find all real roots $\left(a_0,a_1,a_2,b_0,d_1,d_2,d_3\right)$.  Each of these $7$-tuples of coefficients $\left(a_0,a_1,a_2,b_0,d_1,d_2,d_3\right)$ corresponds to a discrete graph whose Fermi surface is reducible.
The difficulty in finding all complex solutions to pair~\eqref{eq:C8fact1} is discussed at the end of this section.

\subsubsection{Complex roots of $\text{Coeff}_E$ for factorization~\eqref{eq:C8fact2}}\label{sec:C8fact2}

A subset of $\text{Coeff}_E$ for the factorization~\eqref{eq:C8fact2} in Case C8 is
\begin{equation}\label{eq:subcoeff2}
\begin{aligned}
\mathcal J = \big\{ -\bar a_1 \bar a_2 - \bar b_0 \bar d_3,\,
 \bar a_1 a_2 + \bar d_1 d_2,\,
 -\bar a_2 \bar b_0 \bar d_1 - \bar a_1 \bar b_0 \bar d_2  {\color{black}+ a_0} \bar b_0 \bar d_3 - \bar a_1 d_1 \bar d_3 - \bar a_2 d_2 \bar d_3,& \\
 -\bar b_0^2 \bar d_1 \bar d_2 + b_0 \bar b_0^2 \bar d_3 - d_1 d_2 \bar d_3^2 + \bar b_0 d_3 \bar d_3^2,\,
 -b_0^2 d_1 d_2 + b_0^2 \bar b_0 d_3 - \bar d_1 \bar d_2 d_3^2 + b_0 d_3^2 \bar d_3&\big\}.
 \end{aligned}
\end{equation}
A common root of $\text{Coeff}_E$ has to be a common root of~$\mathcal J$. We will find all roots of \eqref{eq:subcoeff2} and check that they are in fact roots of $\text{Coeff}_E$. From the first two entries in $\mathcal J$, $a_1 = -b_0 d_3/a_2$ and $d_1 = \bar a_2 b_0 d_3/(a_2 \bar d_2)$. {\color{black} Note that the condition for $a_1$ is not achievable if all non-loop edge weights are positive.}  By substituting these relations and their complex conjugates into \eqref{eq:subcoeff2}, the five elements of $\mathcal J$ become (in the same order)
 \begin{equation}\label{eq:subcoeff2a}
 \begin{aligned}
\mathcal J = \big\{  0, 0, \bar d_3 ( {\color{black}a_0} \bar b_0 - a_2 \bar b_0^2/d_2 - \bar a_2 d_2 + \bar b_0^2 \bar d_2/\bar a_2 + b_0 \bar b_0 d_3 \bar d_3/(a_2 \bar d_2)),& \\
-(-\bar a_2 b_0 d_2 + a_2 \bar b_0 \bar d_2) \bar d_3 (a_2 \bar b_0^2 \bar d_2 -  \bar a_2 d_2 d_3 \bar d_3)/(a_2 \bar a_2 d_2 \bar d_2),& \\
     -(-\bar a_2 b_0 d_2 + a_2 \bar b_0 \bar d_2) d_3 (-\bar a_2 b_0^2 d_2 + a_2 \bar d_2 d_3 \bar d_3)/(
 a_2 \bar a_2 d_2 \bar d_2)&\big\}.
 \end{aligned}
\end{equation}
Vanishing of the third entry in \eqref{eq:subcoeff2a} requires that $a_0 =  {\color{black}a_2 \bar b_0/d_2 + \bar a_2 d_2/ \bar b_0 - \bar b_0 \bar d_2/\bar a_2 - b_0  d_3 \bar d_3/(a_2 \bar d_2)}$ and that this expression be real (since $a_0$ must be real).  By making this substitution for $a_0$ in~\eqref{eq:subcoeff2a}, $\mathcal J$ is updated to
\begin{equation}\label{eq:subcoeff2b}
\begin{aligned}
\mathcal J = \big\{0, 0, 0, -(-\bar a_2 b_0 d_2 + a_2 \bar b_0 \bar d_2) \bar d_3 (a_2 \bar b_0^2 \bar d_2 - \bar a_2 d_2 d_3 \bar d_3)/(a_2 \bar a_2 d_2 \bar d_2),&\\
     -(-\bar a_2 b_0 d_2 + a_2 \bar b_0 \bar d_2) d_3 (-\bar a_2 b_0^2 d_2 + a_2 \bar d_2 d_3 \bar d_3)/(
 a_2 \bar a_2 d_2 \bar d_2)&\big\}.
\end{aligned}
\end{equation}  
For \eqref{eq:subcoeff2b} to vanish, either $\bar a_2 = a_2 \bar b_0 \bar d_2/(b_0 d_2)$ or $\bar a_2 = a_2 \bar d_2 d_3 \bar d_3/(b_0^2 d_2)$.  In the former case, \eqref{eq:subcoeff2b} identically vanishes and $ a_0 =  {\color{black}a_2 \bar b_0/d_2 + \bar a_2 d_2/ \bar b_0 - \bar b_0 \bar d_2/\bar a_2 - b_0  d_3 \bar d_3/(a_2 \bar d_2)} \in \mathbb R$.  Note that $\bar a_2 = a_2 \bar b_0 \bar d_2/(b_0 d_2)$ is a constraint on $a_2$ and $\bar a_2$ but does not eliminate $a_2$ and $\bar a_2$. We obtain that the substitutions
\begin{equation}\label{parameterization1}
\begin{aligned}
  a_0 &=  {\color{black}a_2 \bar b_0/d_2 + \bar a_2 d_2/ \bar b_0 -   \bar b_0 \bar d_2/\bar a_2 - b_0  d_3 \bar d_3/(a_2 \bar d_2)},\\
  a_1 &= -b_0 d_3/a_2,\\
  \bar a_2 &= a_2 \bar b_0 \bar d_2/(b_0 d_2),\\
  d_1 &= \bar a_2 b_0 d_3/(a_2 \bar d_2)
\end{aligned}
\end{equation}
cause \eqref{eq:subcoeff2} to vanish.
In the latter case, by substituting $\bar a_2 = a_2 \bar d_2 d_3 \bar d_3/(b_0^2 d_2)$ into \eqref{eq:subcoeff2b}, $\mathcal J$ is updated into
 \begin{equation}\label{eq:subcoeff2c}
\mathcal J = \big\{0, 0, 0, 0,
 -(b_0 \bar b_0 - d_3 \bar d_3)^2 (b_0 \bar b_0 + d_3 \bar d_3)/(\bar b_0 \bar d_3)\big\}.
\end{equation}  
Thus $b_0 \bar b_0 - d_3 \bar d_3=0$. In fact, the relations $\bar a_2 = a_2 \bar d_2 d_3 \bar d_3/(b_0^2 d_2)$ and $b_0 \bar b_0 - d_3 \bar d_3=0$ combine to give $\bar a_2 = a_2 \bar b_0 \bar d_2/(b_0 d_2)$, which turns the latter case ($\bar a_2 = a_2 \bar d_2 d_3 \bar d_3/(b_0^2 d_2)$) into a subcase of the former case ($\bar a_2 = a_2 \bar b_0 \bar d_2/(b_0 d_2)$).
We check in Mathematica{\textregistered} that $\text{Coeff}_E$ vanishes identically for \eqref{parameterization1}. Thus all complex solutions to $\text{Coeff}_E$ for $a_0\in \mathbb R$ and $S_\Gamma \in (\mathbb C^*)^6$ are represented by \eqref{parameterization1}.

We conclude that $P_c$ factors into the form~\eqref{eq:C8fact2} if and only if the relations~\eqref{parameterization1} are satisfied.  These relations provide a parameterization of all factorizations of $P_c$ by the four variables $\left\{a_2,b_0,d_2,d_3\right\}$.  The factorizations are
\begin{equation}
\begin{aligned}
|d_2|^2 P_c 
=-\big( \bar d_2 d_3 z_1^2 + d_2\bar d_3 + (|d_3|^2+|d_2|^2 + a_2 \bar d_2 E/b_0)z_1\big)
\big(b_0 d_2z_2^2 +\bar b_0\bar d_2 + (|d_2|^2+|b_0|^2 + b_0 d_2E/a_2)z_2\big).
 \end{aligned}
\end{equation}

\subsubsection{Complex roots of $\text{Coeff}_E$ for factorization~\eqref{eq:C8fact3}}\label{sec:C8fact3}

A subset of $\text{Coeff}_E$ is
\begin{equation}\label{eq:subcoeff3}
\begin{aligned}
\mathcal K = \big\{(\bar a_1 a_2 + \bar d_1 d_2) (\bar a_1 \bar a_2 + \bar b_0 \bar d_3), (-d_1 d_2 + \bar b_0 d_3) (\bar d_1 \bar d_2 - b_0 \bar d_3) (\bar b_0 \bar d_1 - d_2 \bar d_3)^2&\,, \\
-a_1 \bar a_1 d_1 \bar d_1 d_2 \bar d_2 + \bar a_1^2 b_0 d_1 \bar d_2 d_3 + a_1^2 \bar b_0 \bar d_1 d_2 \bar d_3 - 
 a_1 \bar a_1 b_0 \bar b_0 d_3 \bar d_3&\,,\\
 \bar a_2 b_0^3 d_1^2 d_2 - a_2 b_0 d_1^3 \bar d_1 \bar d_2 + a_1 b_0^2 \bar b_0 d_1 d_2 \bar d_2  
 {\color{black} - a_0} b_0^2 d_1^2 d_2 \bar d_2 - a_1 b_0 d_1^2 \bar d_1 d_2 \bar d_2 + a_2 b_0 d_1^2 d_2 \bar d_2^2
 - a_1 b_0 d_1 d_2^2 \bar d_2^2&\\ 
  - \bar a_2 b_0^3 \bar b_0 d_1 d_3  - a_1 b_0^2 \bar b_0^2 \bar d_2 d_3 
  {\color{black}+ a_0} b_0^2 \bar b_0 d_1 \bar d_2 d_3  {\color{black} + a_0} b_0 d_1^2 \bar d_1 \bar d_2 d_3 - 
 a_1 d_1^2 \bar d_1^2 \bar d_2 d_3 - \bar a_2 b_0^2 d_1 d_2 \bar d_2 d_3  &\\ 
 - a_2 b_0 \bar b_0 d_1 \bar d_2^2 d_3 + a_2 d_1^2 \bar d_1 \bar d_2^2 d_3+ 
 a_1 b_0 \bar b_0 d_2 \bar d_2^2 d_3  {\color{black}+ a_0} b_0 d_1 d_2 \bar d_2^2 d_3 - 
 a_1 d_1 \bar d_1 d_2 \bar d_2^2 d_3 - a_2 d_1 d_2 \bar d_2^3 d_3  &\\ 
 + \bar a_2 b_0^2 \bar b_0 \bar d_2 d_3^2 - 
 \bar a_2 b_0 d_1 \bar d_1 \bar d_2 d_3^2  {\color{black}- a_0} b_0 \bar b_0 \bar d_2^2 d_3^2 + 
 2 a_1 \bar b_0 \bar d_1 \bar d_2^2 d_3^2  {\color{black}- a_0} d_1 \bar d_1 \bar d_2^2 d_3^2 + a_2 \bar b_0 \bar d_2^3 d_3^2
 + \bar a_2 \bar d_1 \bar d_2^2 d_3^3 &\\ 
  + a_2 b_0^2 d_1^3 \bar d_3 + 2 a_1 b_0^2 d_1^2 d_2 \bar d_3 - 
 a_1 b_0^2 \bar b_0 d_1 d_3 \bar d_3  {\color{black}- a_0} b_0^2 d_1^2 d_3 \bar d_3 + a_1 b_0 d_1^2 \bar d_1 d_3 \bar d_3 -
  a_2 b_0 d_1^2 \bar d_2 d_3 \bar d_3  &\\ 
  + \bar a_2 b_0^2 d_1 d_3^2 \bar d_3 - 
 a_1 b_0 \bar b_0 \bar d_2 d_3^2 \bar d_3 
  {\color{black}+ a_0} b_0 d_1 \bar d_2 d_3^2 \bar d_3 + 
 a_1 d_1 \bar d_1 \bar d_2 d_3^2 \bar d_3 - \bar a_2 b_0 \bar d_2 d_3^3 \bar d_3 - a_1 b_0 d_1 d_3^2 \bar d_3^2&\,,\\
a_1 \bar a_1 a_2 \bar a_2 b_0 \bar b_0  {\color{black}+ a_0} \bar a_1 a_2 \bar a_2 b_0 d_1 
 {\color{black}+ a_0} a_1 a_2 \bar a_2 \bar b_0 \bar d_1 + 
 a_1 \bar a_1 a_2 \bar a_2 d_1 \bar d_1 - a_2 \bar a_2 b_0 \bar b_0 d_1 \bar d_1
   {\color{black}+ a_0} a_1 \bar a_1 \bar a_2 b_0 d_2 &\\
  - \bar a_1 \bar a_2 b_0^2 d_1 d_2  {\color{black}- a_0^2} a_1 \bar a_2 \bar d_1 d_2 + a_1 a_2 \bar a_2^2 \bar d_1 d_2 + 
 a_1 \bar a_2 b_0 \bar b_0 \bar d_1 d_2  {\color{black}+ 2 a_0} \bar a_2 b_0 d_1 \bar d_1 d_2 
 + 2 a_1 \bar a_2 d_1 \bar d_1^2 d_2 &\\
 {\color{black} + a_0} a_1 \bar a_1 a_2 \bar b_0 \bar d_2  {\color{black}- a_0^2} \bar a_1 a_2 d_1 \bar d_2 + \bar a_1 a_2^2 \bar a_2 d_1 \bar d_2 + 
 \bar a_1 a_2 b_0 \bar b_0 d_1 \bar d_2 - a_1 a_2 \bar b_0^2 \bar d_1 \bar d_2   {\color{black}+ 
 2 a_0} a_2 \bar b_0 d_1 \bar d_1 \bar d_2 &\\
 + 2 \bar a_1 a_2 d_1^2 \bar d_1 \bar d_2 + 
 a_1 \bar a_1 a_2 \bar a_2 d_2 \bar d_2 - a_1 \bar a_1 b_0 \bar b_0 d_2 \bar d_2  
 {\color{black} + 2 a_0} \bar a_1 b_0 d_1 d_2 \bar d_2  {\color{black}+ 2 a_0} a_1 \bar b_0 \bar d_1 d_2 \bar d_2 
  {\color{black}- 3 a_0^2} d_1 \bar d_1 d_2 \bar d_2 &\\
 + 2 a_2 \bar a_2 d_1 \bar d_1 d_2 \bar d_2  + 
 b_0 \bar b_0 d_1 \bar d_1 d_2 \bar d_2 + 3 d_1^2 \bar d_1^2 d_2 \bar d_2 + 2 a_1 \bar a_2 \bar d_1 d_2^2 \bar d_2 + 
 2 \bar a_1 a_2 d_1 d_2 \bar d_2^2 + 3 d_1 \bar d_1 d_2^2 \bar d_2^2 &\\
 - \bar a_1 a_2 \bar a_2^2 b_0 d_3 
  {\color{black}+  a_0} a_1 \bar a_1 \bar a_2 \bar d_1 d_3 - \bar a_1 \bar a_2 b_0 d_1 \bar d_1 d_3 - a_1 \bar a_2 \bar b_0 \bar d_1^2 d_3 - \bar a_2^2 b_0 \bar d_1 d_2 d_3  {\color{black}+ a_0} \bar a_1 a_2 \bar a_2 \bar d_2 d_3 &\\
 - a_1 \bar a_1 \bar b_0 \bar d_1 \bar d_2 d_3 - a_2 \bar a_2 \bar b_0 \bar d_1 \bar d_2 d_3 - 2 a_0 \bar a_1 d_1 \bar d_1 \bar d_2 d_3 - \bar b_0 d_1 \bar d_1^2 \bar d_2 d_3 - \bar a_1 \bar a_2 b_0 d_2 \bar d_2 d_3  {\color{black}+ 2 a_0} \bar a_2 \bar d_1 d_2 \bar d_2d_3 &\\
 -\bar a_1 a_2 \bar b_0 \bar d_2^2 d_3 - \bar b_0 \bar d_1 d_2 \bar d_2^2 d_3 - \bar a_1 \bar a_2 \bar d_1 \bar d_2 d_3^2 - a_1 a_2^2 \bar a_2 \bar b_0 \bar d_3  {\color{black}+ a_0} a_1 \bar a_1 a_2 d_1 \bar d_3 - \bar a_1 a_2 b_0 d_1^2 \bar d_3&\\
  -  a_1 a_2 \bar b_0 d_1 \bar d_1 \bar d_3  {\color{black}+ a_0} a_1 a_2 \bar a_2 d_2 \bar d_3 - a_1 \bar a_1 b_0 d_1 d_2 \bar d_3 - 
 a_2 \bar a_2 b_0 d_1 d_2 \bar d_3  {\color{black}+ 2 a_0} a_1 d_1 \bar d_1 d_2 \bar d_3 - b_0 d_1^2 \bar d_1 d_2 \bar d_3 &\\
 - a_1 \bar a_2 b_0 d_2^2 \bar d_3 - a_2^2 \bar b_0 d_1 \bar d_2 \bar d_3 - a_1 a_2 \bar b_0 d_2 \bar d_2 \bar d_3 
  {\color{black}  + 2 a_0} a_2 d_1 d_2 \bar d_2 \bar d_3 - b_0 d_1 d_2^2 \bar d_2 \bar d_3 + a_1 \bar a_1 a_2 \bar a_2 d_3 \bar d_3  &\\
 - a_1 \bar a_1 d_1 \bar d_1 d_3 \bar d_3
 + a_1 \bar a_2 \bar d_1 d_2 d_3 \bar d_3 + \bar a_1 a_2 d_1 \bar d_2 d_3 \bar d_3 - 
 a_2 \bar a_2 d_2 \bar d_2 d_3 \bar d_3 + d_1 \bar d_1 d_2 \bar d_2 d_3 \bar d_3 - a_1 a_2 d_1 d_2 \bar d_3^2\, &\big\}.
\end{aligned}
\end{equation}
A root of $\text{Coeff}_E$ has to be a root of $\mathcal K$. From the first entry in $\mathcal K$, $a_1 = -d_1 \bar d_2/\bar a_2$ or $a_1 = -b_0 d_3/a_2$. From the second entry in $\mathcal K$, $d_1 =\bar b_0 d_3/d_2$ or $d_1=\bar d_2 d_3/b_0$.  Next, following the approach in Sec.~\ref{sec:C8fact2}, it turns out that all roots of $\mathcal K$ take four forms; one of them is equivalent to \eqref{parameterization1}, and the others are subsets of the set defined by \eqref{parameterization1}.  A factorization of the form \eqref{eq:C8fact2} implies a factorization of the form \eqref{eq:C8fact3}.  Thus all roots of $\text{Coeff}_E$ for pair \eqref{eq:C8fact3}, and therefore all
factorizations of the form~\eqref{eq:C8fact3}, are given exactly by the parameterization~\eqref{parameterization1}.
This implies that each factorization of the form~\eqref{eq:C8fact3} is in fact a factorization of the form~\eqref{eq:C8fact2}.

\subsubsection{Real roots of $\text{Coeff}_E$ for factorization~\eqref{eq:C8fact1}}\label{sec:C8fact1}

Observe that, if $P_c$ factors into the form~\eqref{eq:C8fact2} for some graph operator, then it also factors into the form~\eqref{eq:C8fact1}.
We now seek all graphs for which $P_c$ factors into the form~\eqref{eq:C8fact1}.  As stated in Theorem~\ref{thm:irreducible}, we are able to parameterize all such graphs with real coefficients.

Let us briefly summarize the methods of the previous two subsections.
For pairs~\eqref{eq:C8fact2} and~\eqref{eq:C8fact3} of Case C8, the method for obtaining all roots of $\text{Coeff}_E$ for $a_0\in \mathbb R$ and $S_\Gamma \in (\mathbb C^*)^6$ goes as follows.  $\text{Coeff}_E$ is generically a set of polynomials in variables $\left\{x_i: 1\leq i\leq N\right\}$.  Suppose $\text{Coeff}_E$ contains an element that factors into $M$ factors and, for each $k:1\leq k\leq M$, when the $k$th factor is zero, one variable, say $x_{i_k}$, can be solved in terms of other variables (complex conjugates are treated as independent variables).  In this situation, finding roots of $\text{Coeff}_E$ is equivalent to considering $M$ cases where each factor is zero. In the $k$th case, we substitute for $x_{i_k}$ its expression in terms of other variables into all the elements of $\text{Coeff}_E$.  When this expression does not involve $\bar x_{i_k}$, we also substitute $\bar x_{i_k}$ by the conjugate of the expression for $x_{i_k}$. After this substitution, $\text{Coeff}_E$ is updated and the number of variables decreases. If this process can be repeated until all polynomials in $\text{Coeff}_E$ either identically vanish or one of them is identically nonzero (such as $a_1\bar a_1 + b_2\bar b_2$), then we have obtained all common roots of the elements of $\text{Coeff}_E$ and therefore all graphs that factor into the given form $P_c = P_aP_b$.


It turns out that, by this method and symmetry analysis in this section, all admissible graph operators with real coefficients and factorable dispersion function can be determined.  But the algebraic equations become prohibitively complicated when the graph coefficients are allowed to be complex, even with a combination of computer and by-hand computations.  Thus the only part of this study that remains incomplete is the {\color{black}determination} of all graph operators associated to the tetrakis graph that have non-real coefficients and for which the Fermi surface is reducible.

For  $E, a_0\in \mathbb R$ and $S_\Gamma \in (\mathbb R^*)^6$, we can obtain all factorizations of $P_c$ into pair~{\eqref{eq:C8fact1} by exploiting symmetry.  When all these variables are real, the symmetry
$D(z_1,z_2;E)=\bar D(z_1^{-1},z_2^{-1};E)$ becomes
\begin{equation}
D(z_1,z_2;E)=D(z_1^{-1},z_2^{-1};E).
\end{equation}
Suppose that $D(z_1,z_2;E)$ factors into two Laurent, non-monomial, polynomials
%
$
D(z_1,z_2;E)=A(z_1,z_2)B(z_1,z_2).
$
%
Then
\begin{equation}
D(z_1,z_2;E)=A(z_1,z_2)B(z_1,z_2)=A(z_1^{-1},z_2^{-1})B(z_1^{-1},z_2^{-1}).
\end{equation}
Multiplying by $z_1^{m_0}z_2^{n_0}$ as in Step 3 of the algorithm yields
\begin{equation}\label{realsymmetry}
P_c(z_1,z_2)=z_1^{m_0}z_2^{n_0}A(z_1,z_2)B(z_1,z_2)
=z_1^{m_0}z_2^{n_0}A(z_1^{-1},z_2^{-1})B(z_1^{-1},z_2^{-1}).
\end{equation}
%
%
%
If $P_c$ factors into pair~\eqref{eq:C8fact1}, then~(\ref{realsymmetry}) implies
\begin{equation}\label{twofactorizations}
P_c=F_1F_2=F_3F_4,
\end{equation}
in which
\begin{align}
F_1&= a_{1,1}z_1z_2+ a_{1,0}z_1+ a_{0,1}z_2+a_{0,0}, \\
F_2&= b_{1,1}z_1z_2+ b_{1,0}z_1+ b_{0,1}z_2+b_{0,0}, \\
F_3&= a_{1,1}+ a_{1,0}z_2+ a_{0,1}z_1+a_{0,0}z_1z_2, \\
F_4&=b_{1,1}+ b_{1,0}z_2+ b_{0,1}z_1+b_{0,0}z_1z_2.
\end{align}
If one of $F_1$, $F_2$, $F_3$ and $F_4$ factors, then it is reduced to pair~\eqref{eq:C8fact3}.  All the graphs in this case are parameterized by~(\ref{parameterization1}) with the values of $\left\{a_2,b_0,d_2,d_3\right\}$ being real.

Thus we need only to consider the situation in which $F_1$, $F_2$, $F_3$ and $F_4$ are all irreducible.  In this case, we have $F_3|F_1$ or else $F_3|F_2$, since $\mathbb C[x,y]$ is a unique-factorization domain. 

\begin{claim}
$F_3 \nmid F_1$
\end{claim}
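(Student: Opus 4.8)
The plan is to argue by contradiction, exploiting the fact that among the coefficients of $P_c$ in \eqref{eq:Pc8} only the central one (that of $z_1z_2$) carries a nonzero $E^2$ term, while all four ``corner'' coefficients are free of $E$. Throughout I would regard the factorization $P_c = F_1F_2 = F_3F_4$ as an identity of polynomials in $z_1,z_2$ whose coefficients are polynomials in $E$; this is the meaning of reducibility at all energies, and it is consistent with the explicitly $E$-affine factors appearing in \eqref{d1d2} and in Section~\ref{sec:C8fact2}. Accordingly, divisibility is understood in the UFD $\CC(E)[z_1,z_2]$.

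First I would record the corner relations obtained by matching \eqref{eq:C8fact1} against \eqref{eq:Pc8}: the coefficient of $z_1^2z_2^2$ gives $a_{1,1}b_{1,1}=-b_0d_3$, the coefficient of $z_1^2$ gives $a_{1,0}b_{1,0}=-d_1d_2$, and the constant term gives $a_{0,0}b_{0,0}=-b_0d_3$. The key observation is that each of these corner coefficients of $P_c$ is free of $E$, whereas the coefficient of $z_1z_2$ in \eqref{eq:Pc8} is $E^2-a_0E-(b_0^2+d_1^2+d_2^2+d_3^2)$, which is genuinely quadratic in $E$. In particular $a_{1,1},a_{0,0},b_{1,1},b_{0,0}$ are all nonzero.

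Next, suppose for contradiction that $F_3\mid F_1$. Since $F_1$ and $F_3$ are both irreducible with the same unit-square support, this forces $F_3=\lambda F_1$ for a unit $\lambda\in\CC(E)^\times$. Comparing the $z_1z_2$-coefficient and the constant coefficient of $F_3=\lambda F_1$ (recall $F_3=a_{1,1}+a_{1,0}z_2+a_{0,1}z_1+a_{0,0}z_1z_2$) yields $a_{0,0}=\lambda a_{1,1}$ and $a_{1,1}=\lambda a_{0,0}$, whence $\lambda^2=1$ and $\lambda=\pm1$; comparing the $z_1$- and $z_2$-coefficients gives $a_{0,1}=\lambda a_{1,0}$. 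Cancelling $F_1$ in $F_1F_2=F_3F_4=\lambda F_1F_4$ then gives $F_4=\lambda F_2$, i.e. $b_{0,0}=\lambda b_{1,1}$ and $b_{0,1}=\lambda b_{1,0}$. (Equivalently, these follow by applying the reflection symmetry $R:(z_1,z_2)\mapsto(z_1^{-1},z_2^{-1})$, under which $R(F_1)=F_3$, $R(F_2)=F_4$, and $R(P_c)=P_c$.)

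Finally I would extract the contradiction from the central coefficient. The coefficient of $z_1z_2$ in $F_1F_2$ is $a_{1,1}b_{0,0}+a_{0,0}b_{1,1}+a_{1,0}b_{0,1}+a_{0,1}b_{1,0}$; substituting the four relations above collapses it to $2\lambda\,(a_{1,1}b_{1,1}+a_{1,0}b_{1,0})=2\lambda\,(-b_0d_3-d_1d_2)$, which is free of $E$. This must equal the coefficient of $z_1z_2$ in $P_c$, namely $E^2-a_0E-(b_0^2+d_1^2+d_2^2+d_3^2)$, which has degree $2$ in $E$ — impossible. Hence $F_3\nmid F_1$. The only load-bearing step is the first-paragraph identification that the corner coefficients of $P_c$ are $E$-free while its central coefficient is quadratic in $E$: once $F_3\sim F_1$ is seen to force $F_1$ and $F_2$ to be $\lambda$-palindromic, the product's central coefficient degenerates into an $E$-free quantity, and the degree mismatch is immediate. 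Everything else is routine coefficient comparison.
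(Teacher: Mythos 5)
Your argument is correct in substance, and its decisive step takes a genuinely different route from the paper's. Both proofs begin identically: from $F_3\mid F_1$ and the nonvanishing corner coefficients one gets $F_3=\lambda F_1$ with $\lambda=\pm1$, which is exactly the paper's reduction $(a_{1,1},a_{1,0},a_{0,1},a_{0,0})=C(a_{0,0},a_{0,1},a_{1,0},a_{1,1})$, $C=\pm1$, leading to the constrained pairs \eqref{eq:8more11}--\eqref{eq:8more12}. At that point the paper feeds those constrained shapes back into the machinery of Steps 5--8 and reports that the Gr\"obner elimination produces $\mp1\in\text{Coeff}_E$, whence no such factorization can persist in $E$. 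You instead cancel $F_1$ to get $F_4=\lambda F_2$, so that \emph{both} factors are $\lambda$-palindromic, and then compute the $z_1z_2$-coefficient of $F_1F_2$ by hand: it collapses to $2\lambda(a_{1,1}b_{1,1}+a_{1,0}b_{1,0})=2\lambda(-b_0d_3-d_1d_2)$, an $E$-free quantity, whereas the $z_1z_2$-coefficient of $P_c$ in \eqref{eq:Pc8} is monic quadratic in $E$. This replaces the computer-algebra step by an elementary coefficient comparison, and it even explains where the paper's constants $\mp1$ come from: they are the unmatchable $E^2$-coefficient of the central entry. What your route buys is transparency; what the paper's route buys is uniformity, since the same mechanical elimination disposes of every constrained shape without case-specific insight.

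One caveat concerns your framing, not your computation. You assert that reducibility at all energies means a factorization whose coefficients are polynomial (or rational) in $E$, and you then carry out divisibility in $\CC(E)[z_1,z_2]$. That equivalence is false in general: a family can factor at every fixed $E$ without factoring over $\CC(E)$ --- for instance $z_1^2z_2^2-E$ splits for each fixed $E$ but is irreducible over $\CC(E)$; pointwise factorization only yields factors with coefficients algebraic over $\CC(E)$. Fortunately your proof never actually uses rationality in $E$: the corner relations, $F_3=\lambda F_1$, $F_4=\lambda F_2$, and the collapse of the central coefficient are all valid at each fixed $E$ over $\CC$. Read pointwise, your ``degree mismatch'' becomes the statement that $E$ must satisfy $E^2-a_0E-(b_0^2+d_1^2+d_2^2+d_3^2)=\pm2(-b_0d_3-d_1d_2)$, so $F_3\mid F_1$ can occur for at most four values of $E$. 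That finitely-many-$E$ statement is exactly the strength of the paper's own conclusion (a nonzero constant in $\text{Coeff}_E$ likewise rules out the constrained factorizations only away from finitely many $E$), and it is all the surrounding argument needs, since the subsequent step only requires $F_3\mid F_2$ at infinitely many $E$. You should state your conclusion in that pointwise form rather than as an identity in $\CC(E)[z_1,z_2]$.
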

\begin{proof}
The assumption that $F_3|F_1$ implies that $(a_{1,1}, a_{1,0}, a_{0,1},a_{0,0})=C( a_{0,0},a_{0,1},a_{1,0},a_{1,1})$.  This leads to $C^2=1$ and therefore $C=\pm1$. 
Thus $P_c$ factors into either
\begin{equation}\label{eq:8more11}
\begin{aligned}
P_a &= a_{0,0}z_1z_2+ a_{0,1}z_1+ a_{0,1}z_2+a_{0,0} \\
P_b &=b_{1,1}z_1z_2+ b_{1,0}z_1+ b_{0,1}z_2+b_{0,0}
\end{aligned}
\end{equation}
or else
\begin{equation}\label{eq:8more12}
\begin{aligned}
P_a &= -a_{0,0}z_1z_2 - a_{0,1}z_1+ a_{0,1}z_2+a_{0,0} \\
P_b &=b_{1,1}z_1z_2+ b_{1,0}z_1+ b_{0,1}z_2+b_{0,0}.
\end{aligned}
\end{equation}
Performing Steps 5-8 of the algorithm for the pair~\eqref{eq:8more11} reveals that $\text{Coeff}_E$ contains the element $-1$; and performing the algorithm for the pair~\eqref{eq:8more12} reveals that $\text{Coeff}_E$ contains the element $1$.  In either case, $P_c$ is not factorable, which is inconsistent with the assumption~(\ref{twofactorizations}).
\end{proof}

In the case in which $F_1$, $F_2$, $F_3$ and $F_4$ are all irreducible and $F_3|F_2$, we have $(a_{1,1}, a_{1,0}, a_{0,1},a_{0,0})= C (b_{0,0},  b_{0,1},b_{1,0},b_{1,1})$ for some complex $C$. That is,
$P_c$ factors into 
\begin{equation}\label{eq:8more13}
\begin{aligned}
P_a &= C(b_{0,0}z_1z_2+ b_{0,1}z_1+ b_{1,0}z_2+b_{1,1}), \\
P_b &=b_{1,1}z_1z_2+ b_{1,0}z_1+ b_{0,1}z_2+{\color{black}b_{0,0}}.
\end{aligned}
\end{equation}
Performing the algorithm for the pair~\eqref{eq:8more13}, we obtain that $\text{Coeff}_E$ consists of the following four polynomials in the variables $(a_0,a_1,a_2,b_0,d_1,d_2,d_3)$:
\begin{equation}\label{eq:coeff8more1}
\begin{aligned}
\text{Coeff}_E = \big\{(a_1 a_2 + d_1 d_2) (a_1 a_2 + b_0 d_3)&\,,\\
2 a_1  a_2^2 b_0 d_1 + 2 a_1^2 a_2 b_0 d_2  {\color{black}- a_0} a_1 a_2 d_1 d_2 + a_2 b_0 d_1^2 d_2 + 
  a_1 b_0 d_1 d_2^2  {\color{black}- a_0} a_1 a_2 b_0 d_3 + 2 a_1^2 a_2 d_1 d_3 + a_2 b_0^2 d_1 d_3\\
   + 2 a_1 a_2^2 d_2 d_3 + a_1 b_0^2 d_2 d_3  {\color{black}- 2 a_0} b_0 d_1 d_2 d_3 + a_1 d_1^2 d_2 d_3 +
   a_2 d_1 d_2^2 d_3 + a_1 b_0 d_1 d_3^2 + a_2 b_0 d_2 d_3^2&\,,\\
 a_2^2 b_0^2 d_1^2 + 3 a_1 a_2 b_0^2 d_1 d_2  {\color{black}- a_0} a_2 b_0 d_1^2 d_2 - 
  a_1 a_2 d_1^3 d_2 + a_1^2 b_0^2 d_2^2  {\color{black}- a_0} a_1 b_0 d_1 d_2^2 + a_1^2 d_1^2 d_2^2 +
   a_2^2 d_1^2 d_2^2 \\
   + b_0^2 d_1^2 d_2^2 - a_1 a_2 d_1 d_2^3 - a_1 a_2 b_0^3 d_3 
    {\color{black}- a_0} a_2 b_0^2 d_1 d_3 + 3 a_1 a_2 b_0 d_1^2 d_3  {\color{black} - a_0} a_1 b_0^2 d_2 d_3 
     {\color{black}+ a_0^2} b_0 d_1 d_2 d_3 - b_0^3 d_1 d_2 d_3\\
    {\color{black}- a_0} a_1 d_1^2 d_2 d_3 - b_0 d_1^3 d_2 d_3 + 3 a_1 a_2 b_0 d_2^2 d_3 
     {\color{black}- a_0} a_2 d_1 d_2^2 d_3 - 
  b_0 d_1 d_2^3 d_3 + a_1^2 b_0^2 d_3^2 + a_2^2 b_0^2 d_3^2  {\color{black}- a_0} a_1 b_0 d_1 d_3^2\\
   + a_1^2 d_1^2 d_3^2 + b_0^2 d_1^2 d_3^2  {\color{black}- a_0} a_2 b_0 d_2 d_3^2 + 
  3 a_1 a_2 d_1 d_2 d_3^2 + a_2^2 d_2^2 d_3^2 + b_0^2 d_2^2 d_3^2 + 
  d_1^2 d_2^2 d_3^2 - a_1 a_2 b_0 d_3^3 - b_0 d_1 d_2 d_3^3&\,,\\ 
  a_2 b_0^3 d_1^2 d_2 - a_2 b_0 d_1^4 d_2 + a_1 b_0^3 d_1 d_2^2 
   {\color{black} - a_0} b_0^2 d_1^2 d_2^2 + a_1 b_0 d_1^3 d_2^2 + a_2 b_0 d_1^2 d_2^3 - 
  a_1 b_0 d_1 d_2^4 - a_2 b_0^4 d_1 d_3 + a_2 b_0^2 d_1^3 d_3\\
   - a_1 b_0^4 d_2 d_3 
    {\color{black}+ a_0} b_0^3 d_1 d_2 d_3 - 2 a_1 b_0^2 d_1^2 d_2 d_3  {\color{black} + a_0} b_0 d_1^3 d_2 d_3 - 
  a_1 d_1^4 d_2 d_3 - 2 a_2 b_0^2 d_1 d_2^2 d_3 + a_2 d_1^3 d_2^2 d_3 + 
  a_1 b_0^2 d_2^3 d_3\\
    {\color{black}+ a_0} b_0 d_1 d_2^3 d_3 + a_1 d_1^2 d_2^3 d_3 - 
  a_2 d_1 d_2^4 d_3 + a_1 b_0^3 d_1 d_3^2  {\color{black}- a_0} b_0^2 d_1^2 d_3^2 + 
  a_1 b_0 d_1^3 d_3^2 + a_2 b_0^3 d_2 d_3^2 - 2 a_2 b_0 d_1^2 d_2 d_3^2 \\
   {\color{black}- a_0} b_0^2 d_2^2 d_3^2 - 2 a_1 b_0 d_1 d_2^2 d_3^2  {\color{black}- a_0} d_1^2 d_2^2 d_3^2 + 
  a_2 b_0 d_2^3 d_3^2 + a_2 b_0^2 d_1 d_3^3 + a_1 b_0^2 d_2 d_3^3
   {\color{black} + a_0} b_0 d_1 d_2 d_3^3 + a_1 d_1^2 d_2 d_3^3\\
   + a_2 d_1 d_2^2 d_3^3 - 
  a_1 b_0 d_1 d_3^4 - a_2 b_0 d_2 d_3^4
&\,\big\}.
\end{aligned}
\end{equation}
We will find all simultaneous real roots $(a_0,a_1,a_2,b_0,d_1,d_2,d_3)$ to the set~\eqref{eq:coeff8more1}. 
The vanishing of the first element in $\text{Coeff}_E$ implies that $a_1 = -d_1 d_2/a_2$ or $a_1 = -b_0 d_3/a_2$.
Updated with  $a_1 = -d_1 d_2/a_2$, the second element in $\text{Coeff}_E$ becomes 
\begin{equation}
a_2^{-1}(-d_1 d_2 + b_0 d_3) \big(  {\color{black} -a_0} a_2 d_1 d_2 - d_1 d_2 (b_0 d_2 + d_1 d_3) + a_2^2 (b_0 d_1 + d_2 d_3)\big).
\end{equation}
Thus $d_1 = b_0 d_3/d_2$ or $a_0 =  {\color{black}(-d_1 d_2 (b_0 d_2 + d_1 d_3) + a_2^2 (b_0 d_1 + d_2 d_3))/(a_2 d_1 d_2)}$.  With these two substitutions separately, the third element becomes
\begin{equation}\label{thirda}
d_3^2 \big(  {\color{black} -a_0} a_2 b_0 d_2 + a_2^2 (b_0^2 + d_2^2) - b_0^2 (d_2^2 + d_3^2)\big)^2/(a_2^2 d_2^2)
\end{equation}
or
\begin{equation}\label{thirdb}
(a_2^2 + d_1^2) (a_2^2 + d_2^2) (d_1 d_2 - b_0 d_3)^2/a_2^2,
\end{equation}
respectively.
By setting (\ref{thirda}) to zero and solving for $a_0$ or setting (\ref{thirdb}) to zero and solving for $d_1$, one obtains exactly the same expressions for $a_0$ and $d_1$.
Thus in the case $a_1 = -d_1 d_2/a_2$, we must have 
\begin{equation}\label{parameterization2}
\begin{aligned}
a_0 &=  {\color{black}(a_2^2 (b_0^2 + d_2^2) - b_0^2 (d_2^2 + d_3^2))/(a_2 b_0 d_2)}, \\
a_1 &= -b_0 d_3/a_2,\\
d_1 &= b_0 d_3/d_2.
\end{aligned}
\qquad \text{(parametrization for real coefficients)}
\end{equation} 
Similarly in the case that $a_1 = -b_0 d_3/a_2$, we obtain exactly these same expressions for $d_1$ and $a_0$ (for this, the fourth element of $\text{Coeff}_E$ is used). Thus, \eqref{parameterization2} holds in any case.  It can be checked that all elements of $\text{Coeff}_E$ vanish when \eqref{parameterization2} holds. Thus all solutions to $\text{Coeff}_E$ for $a_0\in \mathbb R$ and $S_\Gamma \in (\mathbb R^*)^6$ are given by~\eqref{parameterization2}.

Summarizing, we have proved in the case of real graph coefficients that if the polynomial $P_c$ admits a factorization of the form~(\ref{eq:C8fact1}) and the polynomials $F_1$, $F_2$, $F_3$ and $F_4$ are irreducible, then the graph coefficients must satisfy the relations (\ref{parameterization2}).  One finds that the set of real tuples $(a_0,a_1,a_2,b_0,d_1,d_2,d_3)$ that satisfy (\ref{parameterization2}) coincide exactly with the relations (\ref{parameterization1}) when all variables are real.  Those relations correspond to the graphs for which one (and therefore all) of the $F_i$ is reducible, and, as mentioned above, such graphs are subsumed by the factorization form~(\ref{eq:C8fact3}) and therefore also of the form~(\ref{eq:C8fact2}).  This implies that the assumption that all $F_i$ are irreducible is untenable.

This concludes the proof that all graphs with real coefficients and reducible Fermi surface must admit a factorization of the form~(\ref{eq:C8fact2}) and that all such factorizations are parameterized by four real variables according to~(\ref{parameterization2}).  Explicitly, the factorizations are
\begin{equation}
\begin{aligned}
d_2^2 P_c 
=-\big( d_2 d_3 z_1^2 + d_2 d_3 + (d_3^2+d_2^2 + a_2 d_2 E/b_0)z_1\big)
\big(b_0 d_2z_2^2 +b_0d_2 + (d_2^2+b_0^2 + b_0 d_2E/a_2)z_2\big),
 \end{aligned}
\end{equation}
and the corresponding factorizations of the dispersion function are
\begin{equation}
\begin{aligned}
D(z_1,z_2;E)
=-\left( d_3 (z_1 + z_1^{-1}) + \textstyle\frac{d_3^2}{d_2}+d_2 + \textstyle\frac{a_2}{b_0} E \right)
    \left(b_0 (z_2 + z_2^{-1}) + d_2+\textstyle\frac{b_0^2}{d_2} + \textstyle\frac{b_0}{a_2} E \right).
 \end{aligned}
\end{equation}

\smallskip
We end this section with a short discussion on complex graph coefficients, particularly, explaining why symmetry does not help with the search for all factorizations of $P_c$.
A factorization of the form~\eqref{eq:C8fact1} implies eight variables in the polynomials in the set $I$ in Step~5 of the algorithm.
When $a_0\in \mathbb R$ and $S_\Gamma \in (\mathbb C^*)^6$, this factorization, together with the symmetry $D(z_1,z_2;E)=\bar D(z_1^{-1},z_2^{-1};E)$ implies $P_c=F_1F_2=F_3F_4$ with
\begin{align}
F_1&=z_1^{m_1}z_2^{n_1}A(z_1,z_2) = a_{1,1}z_1z_2+ a_{1,0}z_1+ a_{0,1}z_2+a_{0,0}, \\
F_2&=z_1^{m_2}z_2^{n_2}B(z_1,z_2) = b_{1,1}z_1z_2+ b_{1,0}z_1+ b_{0,1}z_2+b_{0,0}, \\
F_3&=z_1^{m_3}z_2^{n_3}A(z_1^{-1},z_2^{-1}) = \bar a_{1,1}+ \bar a_{1,0}z_2+ \bar a_{0,1}z_1+\bar a_{0,0}z_1z_2, \\
F_4&=z_1^{m_4}z_2^{n_4}B(z_1^{-1},z_2^{-1})=\bar b_{1,1}+ \bar b_{1,0}z_2+ \bar b_{0,1}z_1+\bar b_{0,0}z_1z_2.
\end{align}
If $F_1$, $F_2$, $F_3$ and $F_4$ are all irreducible, then either $F_3|F_1$ and $F_4|F_2$ or $F_3|F_2$ and $F_4|F_1$. In the case $F_3|F_2$ and $F_4|F_1$, we obtain
\begin{equation}
C (\bar a_{1,1}, \bar a_{1,0}, \bar a_{0,1},\bar a_{0,0})=  (b_{0,0},  b_{0,1},b_{1,0},b_{1,1}),
\end{equation}
where $C$ is some complex number.
This means that $P_c$ factors into 
\begin{equation}\label{eq:8more}
\begin{aligned}
P_a &= a_{1,1}z_1z_2+ a_{1,0}z_1+ a_{0,1}z_2+a_{0,0} \\
P_b &=C(\bar a_{0,0}z_1z_2+ \bar a_{0,1}z_1+ \bar a_{1,0}z_2+\bar a_{1,1}).
\end{aligned}
\end{equation}
Since complex conjugates must be treated as independent variables in the process of variable elimination using Groebner-basis routines, the symmetry has not reduced the number of variables in the polynomials of the set $I$ in Step~5 of the algorithm.  
In fact, $GB_C(I)$ has not changed at all.
Determining all factorizations of the form~\eqref{eq:C8fact1} for the tetrakis graph with complex coefficients remains unsolved.

\section{Necessity of positivity and planarity: Reducible Fermi surfaces}

The examples in this section demonstrate that the positivity of the coefficients and the planarity conditions for irreducibility of the Fermi surface cannot be dispensed with.

\subsection{A non-planar graph with arbitrary coefficients}\label{subsec:nonplanar}

Non-planar graphs obtained by coupling two identical copies of a given periodic graph provide easy examples of reducible Fermi surfaces, as demonstrated in~\cite{Shipman2014}.
Fig.~\ref{fig:reducible2} depicts two identical copies of a (planar) square graph coupled by edges between corresponding vertices.  The resulting graph is not planar.
The matrices $A(n_1,n_2)$ for this example are
\begin{equation}
  A(0,0) =
  \mat{1.1}{0}{b}{\bar b}{0},
  \quad
  A(1,0) =
  \mat{1.1}{a}{0}{0}{a},
  \quad
  A(0,1) =
  \mat{1.1}{c}{0}{0}{c},
\end{equation}
plus $A(-1,0)=A(1,0)^*$ and $A(0,-1)=A(0,1)^*$ being determined by self-adjointness and the rest vanishing.  The Floquet transform for the operator $A-E$ is
\begin{equation}
  \hat A(z_1,z_2)-E \;=\;
  \mat{1.4}
  {az_1+\bar az_1^{-1}+cz_2+\bar cz_2^{-1}-E}
  {b}
  {\bar b}
  {az_1+\bar az_1^{-1}+cz_2+\bar cz_2^{-1}-E},
\end{equation}
and its determinant is
\begin{equation}
  \det ({\color{black}\hat A}(z_1,z_2)-E) \;=\;
  \left( az_1+\bar az_1^{-1}+cz_2+\bar cz_2^{-1} + |b| -E \right)
  \left( az_1+\bar az_1^{-1}+cz_2+\bar cz_2^{-1} - |b| -E \right),
\end{equation}
which demonstrates the reducibility of the Fermi surface for all energies~$E$.
This type of graph is a case of a more general construction of coupled graph operators with reducible Fermi surface~\cite{Shipman2014}.

\begin{figure}[ht]
\centerline{\scalebox{0.25}{\includegraphics{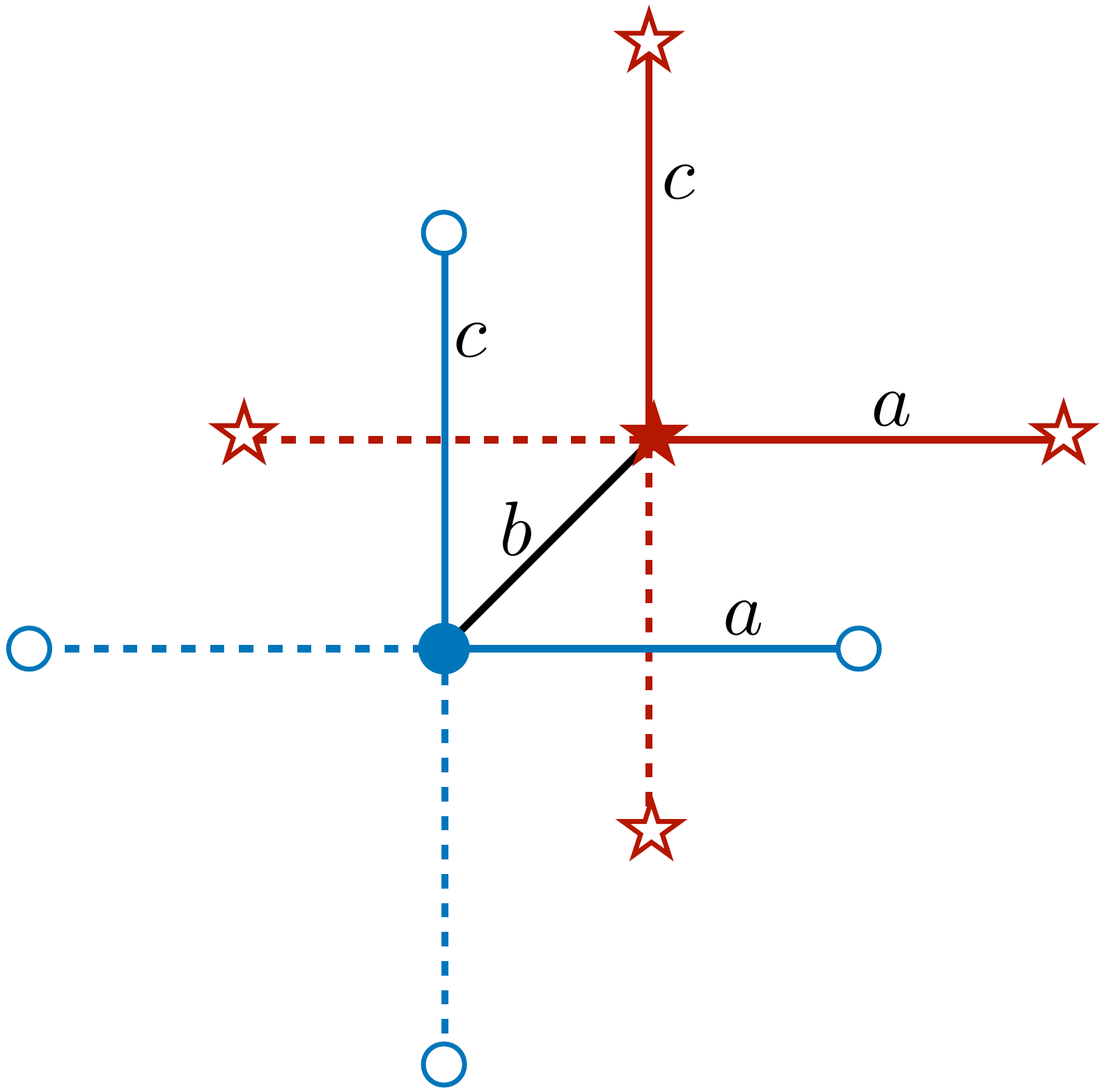}}}
\caption{\small A periodic non-planar graph with reducible Fermi surface.  The solid vertices and edges form a fundamental domain.}
\label{fig:reducible2}
\end{figure}

\subsection{Planar graphs with indefinite coefficients}\label{subsec:indefinite}

According to Theorem~\ref{thm:irreducible}(a), the Fermi surface of admissible graphs with positive coefficients is always irreducible.   
But for the tetrakis graph, factorizations of the form (\ref{d1d2}) described in Theorem~\ref{thm:irreducible}(b) and parameterized in section~\ref{sec:C8fact2} can be obtained with indefinite graph coefficients.  Fig.~\ref{fig:reducible1} shows some of these graphs with coefficients equal to $\pm1$.  The associated factorizations of the dispersion function are
\begin{align*}
  D_{(a)}(E,z_1,z_2) &\,=\, \left(z_1+z_1^{-1}+2-E\right)\left(z_2+z_2^{-1}+2+E\right), \\
  D_{(b)}(E,z_1,z_2) &\,=\, \left(z_1+z_1^{-1}+2-E\right)\left(z_2+z_2^{-1}-2-E\right), \\
  D_{(c)}(E,z_1,z_2) &\,=\, \left(z_1+z_1^{-1}-2+E\right)\left(z_2+z_2^{-1}-2-E\right).
\end{align*}
These three graph operators are unitarily equivalent by gauge transformation since they have identical circuit fluxes, by extending \cite[Lemma\,2.1]{LiebLoss1993a} to periodic graph operators.  Their Fermi surfaces are transformed by $z_1\mapsto z_1e^{i\phi_1}$ and $z_2\mapsto z_2e^{i\phi_1}$, where $\phi_j$ is the phase advance along the $j$-th period vector.  Each of these phases is $0$ or $\pi$ (mod $2\pi$), depending on the sign of the coefficient along the horizontal or vertical edge.  This implies that the spectra of these three operators are identical; in each case, the first factor of $D_{(j)}$ ($j\in\{a,b,c\}$) contributes the interval $E\in[0,4]$, and the second contributes $E\in[-4,0]$.  The dependence of the spectrum of periodic discrete magnetic operators on fluxes is studied in~\cite{KorotyaevSaburova2017}.

\begin{figure}[ht]
\centerline{\scalebox{0.34}{\includegraphics{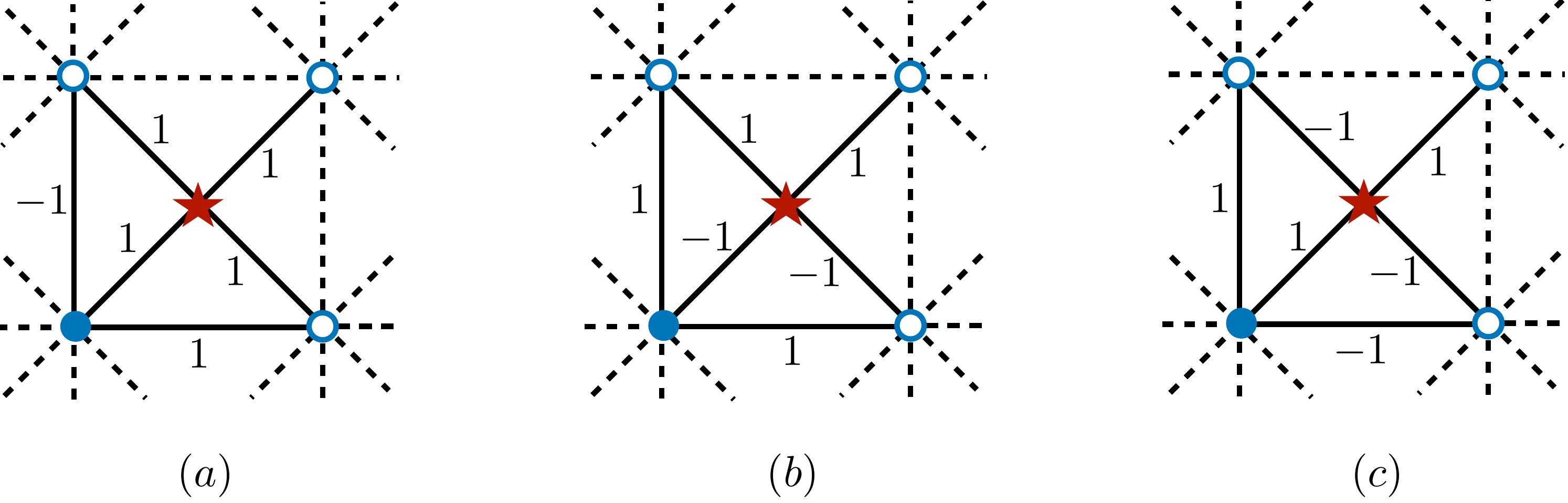}}}
\caption{\small Doubly periodic planar graphs with indefinite coefficients and reducible Fermi surface.  The solid vertices and edges form a fundamental domain, and $\pm1$ are coupling coefficients, as described in section~\ref{subsec:indefinite}.}
\label{fig:reducible1}
\end{figure}

\bigskip

\bigskip
\noindent
{\bfseries Acknowledgment.}  This work was supported by NSF research Grants DMS-1411393 and DMS-1814902.

\bigskip


\begin{thebibliography}{10}

\bibitem{Battig1988}
D.~B{\"a}ttig.
\newblock {\em A toroidal compactification of the two-dimensional
  {B}loch-manifold}.
\newblock PhD thesis, ETH-Z{\"u}rich, 1988.

\bibitem{Battig1992}
D.~B{\"a}ttig.
\newblock A toroidal compactification of the {F}ermi surface for the discrete
  {S}chr{\"o}dinger operator.
\newblock {\em Comment. Math. Helvetici}, 67:1--16, 1992.

\bibitem{BattigKnorrerTrubowitz1991}
D.~B{\"a}ttig, H.~Kn{\"o}rrer, and E.~Trubowitz.
\newblock A directional compactification of the complex {F}ermi surface.
\newblock {\em Compositio Math}, 79(2):205--229, 1991.

\bibitem{CoxLittleOShea2007}
David Cox, John Little, and Donal O'Shea.
\newblock {\em Ideals, Varieties, and Algorithms}.
\newblock Springer, 2007.

\bibitem{DoKuchmentSottile2019a}
Ngoc Do, Peter Kuchment, and Frank Sottile.
\newblock Generic properties of dispersion relations for discrete periodic
  operators.
\newblock {\em arXiv:1910.06472}, 2019.

\bibitem{FisherLiShipman2020a}
Lee Fisher, Wei Li, and Stephen~P. Shipman.
\newblock Reducible fermi surface for multi-layer quantum graphs including
  stacked graphene.
\newblock {\em arXiv:2005.13764 [math-ph]}, 2020.

\bibitem{GiesekerKnorrerTrubowitz1993}
D.~Gieseker, H.~Kn{\"o}rrer, and E.~Trubowitz.
\newblock {\em The Geometry of Algebraic {F}ermi Curves}.
\newblock Academic Press, Boston, 1993.

\bibitem{Harper1955}
P.~G. Harper.
\newblock Single band motion of conduction electrons in a uniform magnetic
  field.
\newblock {\em Proceedings of the Physical Society. Section A},
  68(10):874--878, oct 1955.

\bibitem{KorotyaevSaburova2017}
Evgeny Korotyaev and Natalia Saburova.
\newblock Magnetic {S}chr{\"o}dinger operators on periodic discrete graphs.
\newblock {\em Journal of Functional Analysis}, 272(4):1625 -- 1660, 2017.

\bibitem{KuchmentVainberg2000}
Peter Kuchment and Boris Vainberg.
\newblock On absence of embedded eigenvalues for {S}chr{\"o}dinger operators
  with perturbed periodic potentials.
\newblock {\em Commun. Part. Diff. Equat.}, 25(9--10):1809--1826, 2000.

\bibitem{KuchmentVainberg2006}
Peter Kuchment and Boris Vainberg.
\newblock On the structure of eigenfunctions corresponding to embedded
  eigenvalues of locally perturbed periodic graph operators.
\newblock {\em Comm. Math. Phys.}, 268(3):673--686, 2006.

\bibitem{LiebLoss1993a}
Elliott~H. Lieb and Michael Loss.
\newblock Fluxes, {L}aplacians, and {K}asteleyn's theorem.
\newblock {\em Duke Mathematical Journal}, 71(2):337--363, August 1993.

\bibitem{Liu2020}
Wencai Liu.
\newblock Irreducibility of the {F}ermi variety for discrete periodic
  {S}chr\"odinger operators and embedded eigenvalues.
\newblock {\em arXiv:2006.04733}, 2020.

\bibitem{Shipman2014}
Stephen~P. Shipman.
\newblock Eigenfunctions of unbounded support for embedded eigenvalues of
  locally perturbed periodic graph operators.
\newblock {\em Comm. Math. Phys.}, 332(2):605--626, 2014.

\bibitem{Shipman2019}
Stephen~P. Shipman.
\newblock Reducible {F}ermi surfaces for non-symmetric bilayer quantum-graph
  operators.
\newblock {\em J.~Spectral Theory}, (in press), 2019.

\end{thebibliography}
\end{document}